\documentclass[
final
]{dmtcs-episciences}


\usepackage[utf8]{inputenc}
\usepackage{subfigure}


\usepackage{amsfonts,amsmath}
\usepackage{amsthm}
\usepackage{amssymb}
\usepackage{url}
\usepackage[norelsize,lined,ruled,linesnumbered]{algorithm2e}
\usepackage{mathtools}
\mathtoolsset{showonlyrefs,showmanualtags}

\usepackage{color}

\newtheorem{lemma}{Lemma}

\newtheorem{theorem}{Theorem}
\newtheorem{corollary}{Corollary}
\newtheorem{conjecture}{Conjecture}

\usepackage{nicefrac}

\usepackage[T1]{fontenc}


%

\usepackage[round]{natbib}

\newcommand{\doi}[1]{doi: \href{https://doi.org/#1}{\nolinkurl{#1}}}


\author[Hiroshi Fujiwara et al.]{Hiroshi
  Fujiwara\affiliationmark{1}\thanks{This
  work was supported by JSPS KAKENHI Grant Numbers~25K14990,
  20K11689, and 20K11676;
  ROIS NII Open Collaborative Research 261FP-24163;
  the Support Center for Advanced Telecommunications Technology
  Research (SCAT);
  and the Research Institute for Mathematical Sciences (RIMS),
  an International Joint Usage/Research Center located in Kyoto
  University.}
  \and Kota Miyagi\affiliationmark{2}
  \and Katsuhisa Ouchi\affiliationmark{1}}
\title[Pinwheel Scheduling with Real Periods]{Pinwheel Scheduling with Real Periods}
\affiliation{
  Shinshu University, Nagano, Japan\\
  The University of Electro-Communications, Chofu, Japan}
\keywords{
  pinwheel scheduling,
  perpetual scheduling,
  density conjecture\and
  task packing,
  Beatty sequence}
\begin{document}
\publicationdata{vol. 28:4, SOFSEM 2026}{2026}{4}{10.46298/dmtcs.17657}{2026-03-06; 2026-03-06; 2026-07-07}{2026-07-07}
\maketitle
\begin{abstract}
  For a sequence of tasks, each with a positive integer period, the
  pinwheel scheduling problem involves finding a valid schedule in the
  sense that the schedule performs one task per day and each task is
  performed at least once every consecutive days of its period.  It
  had been conjectured by Chan and Chin (1993) that there exists a
  valid schedule for any sequence of tasks with density, the sum of
  the reciprocals of each period, at most $\frac{5}{6}$.  Recently,
  Kawamura (2024, in press) settled this conjecture affirmatively.  In this paper we
  consider an extended version with real periods proposed by Kawamura,
  in which a valid schedule must perform each task $i$ having a real
  period~$a_{i}$ at least $l$ times in any $\lceil l a_{i}
  \rceil$ consecutive days for all positive integer $l$.  We show that any
  sequence of tasks such that the periods take three distinct real
  values and the density is at most $\frac{5}{6}$ admits a valid
  schedule.  We hereby conjecture that the conjecture of Chan and Chin is
  true also for real periods.
\end{abstract}
\section{Introduction}
\label{sec:introduction}
In the pinwheel scheduling problem introduced by \cite{48075}, we are given a
sequence of tasks, each with a positive integer period, and want to
find a schedule that performs one task per day such that each task is
performed at least once every consecutive days of its period.  Let
$\mathbb{N}$ and $\mathbb{Z}$ be the set of all positive integers and
the set of all integers, respectively, and let $[k] = \{i \;|\; i \in
\mathbb{N}, 1 \leq i \leq k\}$.  Formally, an integer-valued instance
of the pinwheel scheduling problem is a sequence $A = (a_{i})_{i \in
  [k]} \in \mathbb{N}^{k}$, where $a_{i}$ is the \emph{period} of task
$i$.  The goal is to find a \emph{schedule} $S: \mathbb{Z} \to [k]$,
which indicates that task $S(t)$ is performed on day $t$, such that the
following condition is satisfied for each task $i \in [k]$:
\begin{quote}
  for each $m \in \mathbb{Z}$, there exists a day $t \in [m,
    m + a_{i}) \cap \mathbb{Z}$ such that $S(t) = i$.
\end{quote}
Such a schedule is said to be \emph{valid} for the instance $A$.  In
particular, when a schedule $S$ is periodic, that is, there exists $p
\in \mathbb{N}$ such that $S(t) = S(t + p)$ holds for all $t \in
\mathbb{Z}$, we represent the schedule by a sequence of $|S(0)S(1)
\cdots S(p - 1)|$.  We say that an instance is \emph{schedulable} if
there exists a valid schedule for the instance.  For example, the instance
$(2, 4, 4)$ is schedulable, since schedule $|1213|$ is valid for it.

The \emph{density} of an instance $A = (a_{i})_{i \in [k]}$ is defined as
$D(A) = \sum_{i \in [k]} \frac{1}{a_{i}}$.  Intuitively, the density
of an instance means how many tasks a valid schedule for the instance
has to perform on average per day.  Obviously, $D(A) \leq 1$ is
necessary for the instance $A$ to admit a valid schedule.  (See that the
density of instance $(2, 4, 4)$ above is 1.)  However, this is not
sufficient.  Chan and Chin conjectured as follows:
\begin{conjecture}[\cite{CC93}]
  If an integer-valued instance $A$
  satisfies $D(A) \leq \frac{5}{6}$, then $A$ is schedulable.
  \label{conj:integer}
\end{conjecture}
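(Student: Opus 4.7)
The conjecture was open for thirty years and was resolved only very recently by Kawamura, so any plausible plan must go well beyond the greedy-type arguments that suffice for looser density bounds such as $\frac{3}{4}$. I would attempt strong induction on a well-founded measure of instance complexity, for concreteness the pair (number of tasks, sum of periods) in lexicographic order.

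The backbone is a collection of \emph{reductions} that strictly shrink the instance while preserving the bound $D(A)\le \frac{5}{6}$. Three families suggest themselves. First, a duplicate-collapse rule: two tasks of period $2a$ can be replaced by a single task of period $a$ without changing density, which normalises the list of periods. Second, a fix-and-restrict rule: if the smallest period is $a_{1}=2$, place task $1$ on every other day and re-encode the remaining tasks on the other half of the timeline, where each period $a_{i}$ becomes an effective period close to $a_{i}/2$. Third, small base cases such as $(2,3,\cdot)$ or $(3,3,\cdot)$ are handled directly by exhibiting an explicit cyclic schedule.

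The genuinely tight situation is the canonical instance $(2,3,k)$, whose density $\frac{1}{2}+\frac{1}{3}+\frac{1}{k}$ approaches $\frac{5}{6}$ from above as $k\to\infty$. A proof of the $\frac{5}{6}$ bound must therefore treat with care every instance in which the two smallest periods are $2$ and $3$ while the remaining periods are large; I would single out this configuration and give a direct construction, slotting the remaining tasks into the free positions of a length-$6$ frame such as $|123121|$, then verify feasibility via a careful spacing argument on the gaps between consecutive occurrences of each task.

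The principal obstacle, as I foresee it, lies in the fix-and-restrict step: after fixing a frequent task and passing to the induced problem on the surviving slots, the new periods are generally non-integer, and the density on the restricted timeline can spike above $\frac{5}{6}$ because each $a_{i}$ is essentially halved. Making the arithmetic close so that the reduced instance still falls under the inductive hypothesis is the crux; this is precisely why the real-period extension studied in the present paper is the natural habitat for such an induction, and why the authors' further conjecture that the $\frac{5}{6}$ bound holds for real periods is a very natural companion to Conjecture~\ref{conj:integer}.
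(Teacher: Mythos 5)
There is no proof to compare against here: the statement you were given is precisely Conjecture~\ref{conj:integer}, which the paper does not prove but only cites as having been settled by Kawamura (Theorem~\ref{thm:integer}); the present paper proves a different, real-valued three-period statement. More importantly, what you submitted is a proof \emph{plan}, not a proof, and you say so yourself: the ``fix-and-restrict'' step, where fixing a period-$2$ task and passing to the surviving slots roughly halves the remaining periods, is exactly where the induction breaks, because the induced instance need not be integer-valued and its density need not stay below $\frac{5}{6}$. Leaving that step as ``the crux'' means the argument does not close; this is not a repairable local gap but the central difficulty that kept the conjecture open for thirty years, and Kawamura's actual resolution required substantially more machinery (including the real-period relaxation and an extensive, partly computer-assisted case analysis), not a short induction on (number of tasks, sum of periods).

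There is also a concrete misstep in your identification of the tight cases. If the two smallest periods are $2$ and $3$, then $\frac{1}{2}+\frac{1}{3}=\frac{5}{6}$ already, so any instance $(2,3,a_{3},\dots)$ with at least one further task has density strictly above $\frac{5}{6}$ and lies \emph{outside} the hypothesis of the conjecture; these are the witnesses showing the bound is best possible, not cases your proof must handle (and your proposed frame $|123121|$ does not even serve task~$1$ every two days). The delicate instances are instead those with density just below $\frac{5}{6}$ and other small-period patterns (e.g.\ smallest periods $2$ and $4$, or $3$ and $3$, with many large periods), which your sketch does not address. Your duplicate-collapse rule is sound in the needed direction (it is the integer case of Lemma~\ref{lem:partitioning}), but by itself it does not normalise a general instance. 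As it stands, the proposal is a reasonable description of why the problem is hard, not a proof of the statement.
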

This conjecture is best possible, since for any $a_{3}$, the instance
$(2, 3, a_{3})$, whose density is $\frac{5}{6} + \frac{1}{a_{3}} >
\frac{5}{6}$, is not schedulable.  After much work by many researchers
such as
\cite{linlin97,10.1007/s00453-002-0938-9,doi:10.1137/1.9781611977042.8},
\cite{KawamuraPNAS} finally settled this conjecture in the affirmative.
Please refer to \cite{10.1145/3618260.3649757} for the conference version
of the paper.
\begin{theorem}[{\cite{10.1145/3618260.3649757,KawamuraPNAS}}]
  If an integer-valued instance $A$
  satisfies $D(A) \leq \frac{5}{6}$, then $A$ is schedulable.
  \label{thm:integer}
\end{theorem}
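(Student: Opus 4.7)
The plan is to proceed by strong induction on the number of tasks $k$. The base case $k = 1$ is immediate: the constant schedule $S \equiv 1$ is valid for any $a_1 \geq 1$. For the inductive step, given a sorted instance $A = (a_1, \ldots, a_k)$ with $a_1 \leq \cdots \leq a_k$ and $D(A) \leq \frac{5}{6}$, the strategy is to find a ``reservation'' of some regular subset of days that hosts a designated subset of tasks, reducing the problem on the remaining days to a smaller instance on which the inductive hypothesis applies.

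My first step is a case analysis on the smallest period $a_1$. When $a_1 \geq 5$ each task contributes at most $\frac{1}{5}$ to the density, so the instance is spread enough that an earlier weaker bound from the literature can be invoked after a short density-preserving reduction. When $a_1 = 2$ and no task has period $3$, reserving every even day for task~$1$ leaves odd days on which each residual task $i \geq 2$ has effective period $\lfloor a_i / 2 \rfloor \geq 2$, and one can hope the resulting density fits the inductive bound. A symmetric argument covers $a_1 \geq 3$ with no period-$2$ task.

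The main obstacle is the case where both a period-$2$ and a period-$3$ task appear. A rigid reservation fails here: if task~$1$ occupies every even day, then a period-$3$ task is forced onto every odd day (since the window $[2m, 2m+3)$ contains only one odd day, which must host it), already saturating density~$1$ and leaving no slot for any other task. To break through, the plan is to design an explicit periodic interleaving of the two smallest tasks --- for instance, a length-$6$ template built around the extremal pair $(2,3)$ --- and analyze the periodic pattern of free slots it leaves behind. Since the density budget $\frac{5}{6} - \frac{1}{2} - \frac{1}{3} = 0$ is exhausted at the tight family $(2, 3, \cdot)$, the remaining tasks must be accommodated using only the slack coming from $a_3, \ldots, a_k$ being strictly larger than what would saturate the bound.

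I expect this last step --- fitting the residual tasks into the template's free slots uniformly over every admissible configuration of $a_3, \ldots, a_k$ --- to be the decisive difficulty. A plain induction in the density variable alone breaks down right at the boundary, so the argument seems to require a richer invariant, such as a certificate tracking the distribution of free slots modulo small integers like $6$, or a fractional/amortized version of the density bound, that is preserved by the peeling operation. Supplying such a refined invariant appears unavoidable, and is what makes this conjecture's proof genuinely hard rather than routine.
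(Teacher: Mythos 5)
There is a genuine gap, and you in effect concede it yourself: the final step --- fitting the residual tasks into the free slots of a $(2,3)$-template ``uniformly over every admissible configuration of $a_3,\ldots,a_k$'' via some unspecified ``richer invariant'' --- is not a technical loose end but the entire content of the theorem. What you have written up to that point is the classical reservation/peeling strategy: reserve a periodic set of days for the small-period tasks and recurse on the complement with effective periods $\lfloor a_i/2\rfloor$ (or similar). This line of attack is well known and is precisely what the pre-2024 literature pushed as far as it goes; the rounding loss in $1/\lfloor a_i/2\rfloor$ versus $2/a_i$ (e.g.\ a period-$5$ task becomes a period-$2$ task, inflating its density contribution from $\frac{1}{5}$ to $\frac{1}{2}$) means the density bound degrades under peeling, which is why such arguments historically stalled at thresholds strictly below $\frac{5}{6}$ and cannot simply be closed by a better bookkeeping of free slots modulo $6$. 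Your case $a_1 \geq 5$ is also circular as stated: there is no ``earlier weaker bound from the literature'' that yields schedulability at density $\frac{5}{6}$ for such instances --- that guarantee is the theorem being proved.

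For context, this statement is not proved in the present paper at all; it is quoted as Kawamura's Theorem~1 from \cite{10.1145/3618260.3649757}, whose actual proof takes a substantially different route: it passes to the real-valued generalization of the problem (the setting of Theorem~\ref{thm:real_2_density_1} and Lemmas~\ref{lem:monotonicity} and~\ref{lem:partitioning} in this paper), uses monotonicity and period-splitting reductions to compress the instance space, and combines this with an extensive, partly computer-assisted analysis of the finitely many remaining integer-valued configurations below the threshold. So the proposal is not a proof, and the missing ``refined invariant'' you point to is exactly where the real difficulty --- and Kawamura's new machinery --- lives.
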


\cite{10.1145/3618260.3649757,KawamuraPNAS} proposed an
extended version of the pinwheel scheduling problem that allows
\emph{real} values at least 1 for periods of tasks, replacing the
validity condition for each task $i \in [k]$ by:
\begin{quote}
  for each $l \in \mathbb{N}$ and $m \in \mathbb{Z}$,
  there exist at least $l$ values of $t \in [m, m + \lceil l
  a_{i} \rceil) \cap \mathbb{Z}$ such that $S(t) = i$.
\end{quote}
In other words, this condition is that for each $l \in \mathbb{N}$,
task $i$ is performed at least $l$ times in any  $\lceil l
a_{i} \rceil$ consecutive days.  Note that when $a_{i}$ is an integer, the
condition is equivalent to that of the original integer-valued
problem.  For example, a task with period~$\frac{7}{2}$ has to be
performed at least once every 4 days and at least twice every 7 days.
The schedule $|1112112|$ is valid for the instance $(2, \frac{7}{2})$.

It should be remarked that this is not a simple relaxation of the
problem.  More specifically, even if one identifies $n$ tasks each
with period~$a$ in a valid schedule, the identified task may not be
performed according to a period of $\frac{a}{n}$.  For example,
the schedule $|1111213|$ is valid for the instance $(2, 7, 7)$.  Identify
tasks~2 and 3.  Then, the identified task fails to meet the validity
condition for a period of $\frac{7}{2}$; the task is not performed
once every 4 days, though it is performed twice every 7 days.

Nevertheless, we conjecture that Theorem~\ref{thm:integer} holds true
also for the real-valued version:
\begin{conjecture}
  If a real-valued instance $A$ satisfies $D(A) \leq \frac{5}{6}$,
  then $A$ is schedulable.
  \label{conj:real}
\end{conjecture}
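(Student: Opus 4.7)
The plan is to reduce Conjecture~\ref{conj:real} to Theorem~\ref{thm:integer} through a two-stage approximation argument. The first stage handles the passage from real to rational periods: given a real-valued instance $A = (a_i)_{i \in [k]}$ with $D(A) \leq \tfrac{5}{6}$, I would approximate each $a_i$ from above by a decreasing sequence of rationals $a_i^{(n)}$, preserving $D(A^{(n)}) \leq \tfrac{5}{6}$ by choosing the approximations sufficiently close. If each rational instance $A^{(n)}$ admits a valid (necessarily periodic) schedule $S^{(n)}$, then a compactness argument on the tree of finite schedule prefixes yields a limit schedule $S$. Validity of $S$ for $A$ follows because, for each fixed $i$ and $l$, the window bound $\lceil l a_i \rceil$ is eventually matched by $\lceil l a_i^{(n)} \rceil$, and each $S^{(n)}$ already meets the tighter constraint.

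The second, substantially harder stage is to schedule an arbitrary rational instance $A = (p_i/q)_{i \in [k]}$ with $\sum q/p_i \leq \tfrac{5}{6}$. The cautionary example $|1111213|$ for $(2,7,7)$ shows that naively merging tasks of the same period into an integer instance of the same density does \emph{not} in general produce a real-valid schedule after unmerging, so one cannot simply invoke Theorem~\ref{thm:integer} on an obvious scaled instance. Instead, I would attempt to induct on the number of distinct period values in $A$: the base case is exactly the three-value theorem established in this paper, and the inductive step would consolidate two nearby values into one integer-compatible value, arguing that any valid schedule for the consolidated instance can be repaired into a valid schedule for the original by local modifications that respect the per-task gap constraints.

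The main obstacle will be this inductive step. The real-validity condition $t_{j+l} - t_j \leq \lceil l a_i \rceil$ simultaneously constrains windows of every size $l$, so local repair of a schedule can easily fix one window size while breaking another. Overcoming this seems to require opening up Kawamura's construction for the integer case and exploiting its hierarchical decomposition into fast and slow tasks to guarantee \emph{uniform} gap bounds for each real-period task. A natural intermediate target would be the case of four distinct real values, which already goes beyond the tools developed here and should indicate what general machinery is needed before the full conjecture can be settled.
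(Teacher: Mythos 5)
The statement you are proving is not proved in the paper at all: Conjecture~\ref{conj:real} is left open there, and the paper establishes only the special case of three distinct period values (Theorem~\ref{thm:real_3_distinct}), via explicit schedules plus Lemmas~\ref{lem:monotonicity} and~\ref{lem:partitioning}. Your text is likewise not a proof but a research plan, and its central step is exactly the open difficulty. The inductive ``consolidation and repair'' step --- merge two nearby period values into one integer-compatible value, schedule the consolidated instance, then locally repair --- is asserted as an intention, not argued; you yourself note that a local modification fixing one window length can break another, and you offer no mechanism (no invariant, no potential function, no structural property of Kawamura's construction) that would make the repair terminate or preserve validity for all $l$ simultaneously. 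The paper's own counterexample (the $(13,\dots,13,14,\dots,14,78,\dots,78)$ instance, and the $(2,7,7)$ example you cite) shows that the most natural merge/unmerge operations genuinely fail, and the three-value case you want as a base case was obtained from bespoke schedules found by integer programming plus a Beatty-sequence construction --- nothing in it suggests an induction on the number of distinct values. So the heart of the argument is missing, and it is precisely the part the community has not been able to supply; the conjecture remains open after your sketch.

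There is also a concrete flaw in your first stage. If you approximate each $a_i$ from above by rationals $a_i^{(n)}\downarrow a_i$, a valid schedule $S^{(n)}$ for $A^{(n)}$ guarantees $l$ occurrences only in windows of length $\lceil l\,a_i^{(n)}\rceil$, and when $l a_i$ is an integer (in particular whenever $a_i$ is rational) one has $\lceil l\,a_i^{(n)}\rceil \geq l a_i + 1 > \lceil l a_i\rceil$ for every $n$, so the window bounds are never ``eventually matched'' and the compactness limit only certifies the weaker constraint $\lceil l a_i\rceil + 1$. This is repairable --- leave rational periods unperturbed and only round the irrational ones up (for those, $l a_i$ is never an integer, so for each fixed $l$ the ceilings do eventually agree), or, when $D(A)<\frac56$ strictly, round \emph{down} to rationals and invoke Lemma~\ref{lem:monotonicity} with no limit argument at all --- but as written the reduction from real to rational periods is incorrect. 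Even granting the fix, the reduction buys little: Theorem~\ref{thm:integer} does not apply to rational instances under the real-validity condition, and apart from Theorem~\ref{thm:real_2_density_1} and the three-value result of this paper, no tool in sight handles general rational instances of density at most $\frac56$.
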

This is also best possible for the same reason as the integer-valued
problem.  Our main theorem, Theorem~\ref{thm:real_3_distinct}, is a
partial solution to Conjecture~\ref{conj:real}, which strengthens the
plausibility of the conjecture.
\begin{theorem}
  If a real-valued instance $A$ consists of three distinct values and
  satisfies $D(A) \leq \frac{5}{6}$, then $A$ is schedulable.
  \label{thm:real_3_distinct}
\end{theorem}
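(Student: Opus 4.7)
The plan is to proceed by a case analysis on the three distinct period values, denoted $a < b < c$, and their multiplicities $n_a, n_b, n_c \geq 1$, with the goal of either reducing to Kawamura's integer-valued result (Theorem~\ref{thm:integer}) or constructing a periodic schedule by hand. A preliminary normalization step observes that the real-period validity condition at scale $l$ depends on $a_i$ only through $\lceil l a_i \rceil$, so one can perturb each $a_i$ downward to a rational with the same "ceiling profile" up to the length of a candidate schedule; this lets us pick a common denominator and work with genuinely periodic schedules.

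The main dichotomy is on the smallest period $a$. If $a \geq 2$, the first attempt is to round each period down to an integer $a_i' = \lfloor a_i \rfloor$. Because $\lceil l a_i' \rceil \leq \lceil l a_i \rceil$ for all $l$, any schedule valid for the integer-rounded instance is automatically valid for the real instance at every scale. When the rounded density still satisfies the $\frac{5}{6}$ bound, Theorem~\ref{thm:integer} finishes the argument. If instead $a < 2$, the task with period $a$ must appear in every two consecutive days, and more restrictively $l$ times in every $\lceil l a \rceil$ days for all $l$, which forces a nearly rigid Sturmian placement; after fixing that placement, the remaining slots carry a two-distinct-value subinstance involving only $b$ and $c$, which I expect to handle either directly or again by integer rounding plus Kawamura.

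The remaining cases, where rounding-down pushes the density above $\frac{5}{6}$ and blocks the direct Kawamura reduction, require explicit constructions. The key lemma I expect to need is that placing task $i$ at the Beatty-like positions $\{\lfloor k a_i + \phi_i \rfloor : k \in \mathbb{Z}\}$, for a suitable offset $\phi_i$, satisfies the real-period validity condition at \emph{all} scales $l$ simultaneously. This reduces the combinatorial problem to choosing offsets $\phi_a, \phi_b, \phi_c$ so that the three Beatty sequences do not collide on the integer timeline, a condition parametrized by the fractional parts of $a, b, c$ and the multiplicities $n_a, n_b, n_c$. With only three distinct periods the space of qualitative configurations is finite, and the density slack of at least $\frac{1}{6}$ should leave room to adjust offsets.

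The main obstacle is precisely this explicit interleaving step in the residual cases. Unlike the integer setting, real periods couple the validity constraints across all scales $l$, and the authors themselves warn via the $(2,7,7)$ example that one cannot safely aggregate tasks of equal period. I expect the hardest sub-case to be when $a \in (1, 2)$ lies just above an integer and $b, c$ are adversarially chosen just above larger integers: the Sturmian skeleton for $a$ then leaves irregular gaps of alternating sizes, and fitting Beatty sequences for $b$ and $c$ into those gaps without triggering a scale-$l$ violation for any $l$ is the delicate part. If this case resists a uniform construction, a fallback would be to stratify by the fractional parts $\{a\}, \{b\}, \{c\}$ into finitely many regions and handle each with a dedicated pattern, exploiting that three distinct values bound the combinatorial complexity.
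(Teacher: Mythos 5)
There is a genuine gap, and it sits exactly where you point: the residual cases where rounding down pushes the density above $\frac{5}{6}$. Your rounding-plus-Kawamura step is fine as far as it goes (it is essentially Lemma~\ref{lem:monotonicity} plus Theorem~\ref{thm:integer}), but the instances it fails on --- periods just above $2$, $\frac{12}{5}$, $3$, etc., with density close to $\frac{5}{6}$ --- are precisely the content of the theorem, and for these you offer only the hope that three Beatty-type sequences $\{\lfloor k a_i + \phi_i\rfloor\}$ can be made pairwise disjoint by choosing offsets, ``because there is density slack $\frac{1}{6}$.'' No such principle is available: disjointness of three or more Beatty sequences with distinct moduli is a genuinely hard problem (Rayleigh/Beatty complementarity is a two-modulus statement), and density slack alone does not provide offsets. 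Note that the paper's own Beatty-style construction (case~(I) of Lemma~\ref{lem:real_3_exact}, proved in Lemma~\ref{lem:miyagi_1}) is carefully arranged to use only \emph{two} moduli --- task~1 on a ceiling sequence of modulus $a_1$ and tasks~2, 3 alternating on a single sequence of modulus $a_2'/2$ chosen so that $\frac{1}{a_1}+\frac{2}{a_2'}=1$ --- exactly so that the classical two-sequence partition theorem applies; it deliberately avoids the three-modulus interleaving you would need. For the region where that construction does not apply, the paper does not interleave at all: it exhibits six explicit periodic schedules for concrete boundary instances (Lemma~\ref{lem:miyagi_3_2-7}), verifies the all-scales condition for them directly, and covers the remaining frequency region by a monotonicity argument (Lemmas~\ref{lem:real_3_cover} and~\ref{lem:real_3_inclusion}). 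You have no analogue of this finite catalogue, and your fallback (``stratify by fractional parts and handle each region with a dedicated pattern'') is the unproved theorem restated, not a proof.

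A second, more structural omission: you never reduce multiplicities. An instance with three distinct \emph{values} can have arbitrarily many tasks per value (e.g.\ many copies of a large period), so your claim that ``with only three distinct periods the space of qualitative configurations is finite'' is false as stated, and your Beatty scheme would need arbitrarily many pairwise disjoint sequences. The paper disposes of this at the outset with Kawamura's partitioning lemma (Lemma~\ref{lem:partitioning}): a schedule for the three-task instance $(c_1/q_1, c_2/q_2, c_3/q_3)$ yields one for the instance with multiplicities $q_1,q_2,q_3$, so only instances with exactly three tasks and density exactly $\frac{5}{6}$ need be treated (Lemma~\ref{lem:real_3_exact}). Note that this reduction goes in the safe direction, unlike the task-identification trick that the $(2,7,7)$ example rules out. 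Without this lemma or a substitute, even the part of your plan that is sound does not cover the theorem's full generality.
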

\subsection{Previous Results}
\label{subsec:previous_results}
\paragraph{The integer-valued pinwheel scheduling problem.}
The next theorem is one of the results aimed at solving
Conjecture~\ref{conj:integer}, which can be regarded as an
integer-valued version of our main theorem.
\begin{theorem}[{\cite{linlin97}}]
  If an integer-valued instance $A$ consists of three distinct
  values and satisfies $D(A) \leq \frac{5}{6}$, then $A$ is
  schedulable.
  \label{thm:integer_3_distinct}
\end{theorem}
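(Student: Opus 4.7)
The plan is to reduce to a skeletal instance by collapsing multiplicities and then construct an explicit periodic schedule by case analysis on the smallest period.

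First, I would perform the classical multiplicity reduction. Suppose the three distinct periods $a_1 < a_2 < a_3$ occur with multiplicities $m_1, m_2, m_3$, and replace the $m_i$ tasks of period $a_i$ by a single super-task of period $p_i := \lfloor a_i / m_i \rfloor$. Cycling the super-task's occurrences through the $m_i$ real tasks, any valid schedule for the reduced instance lifts to a valid schedule for the original, since consecutive occurrences of any single real task are separated by at most $m_i \cdot p_i \leq a_i$ days. If the reduction merges two values, the reduced instance has at most two distinct periods and is handled by Chan and Chin's two-period version of the $\tfrac{5}{6}$-theorem; otherwise it leaves three distinct periods, each with multiplicity one. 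Although $1/p_i$ may exceed $m_i/a_i$, I would argue that in the regimes where the reduced density strictly exceeds $\tfrac{5}{6}$ the multiplicity $m_i$ must be comparatively large, which pins $a_i$ down to a small range and allows a direct construction that bypasses the reduction.

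The core case is therefore a three-period instance with distinct integer periods $a < b < c$, each of multiplicity one, and $\tfrac{1}{a}+\tfrac{1}{b}+\tfrac{1}{c} \leq \tfrac{5}{6}$, so in particular $a \geq 2$. I would split into cases $a=2$, $a=3$, and $a \geq 4$. In the $a=2$ case, place task~$1$ on every even day and schedule tasks $2, 3$ on odd days. When $b$ and $c$ are both even, the odd slots realize a two-period integer instance of density $\tfrac{2}{b}+\tfrac{2}{c} \leq \tfrac{2}{3}$, scheduleable by Chan and Chin. When $b$ or $c$ is odd, I would build an alternating block pattern with block lengths $\lfloor b/2\rfloor$ and $\lceil b/2\rceil$ (and analogously for $c$) and verify validity window-by-window. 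The cases $a=3$ and $a \geq 4$ are much more constrained by the density bound (for instance $a=3$ forces $\tfrac{1}{b}+\tfrac{1}{c} \leq \tfrac{1}{2}$, leaving only a short list of $(b,c)$ pairs), and each can be handled by an explicit periodic schedule whose validity is confirmed by a finite check.

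The main obstacle I anticipate is the interaction between multiplicity reduction and the density bound: the naive reduction can inflate the density, so one must either argue that the inflated instance is still schedulable by the two-period theorem or handle the large-multiplicity regime separately, where the periods are essentially forced by the density bound. A secondary subtlety is the $a=2$ case when exactly one of $b, c$ is odd, because the induced sub-schedule has a fractional effective period and the two-period theorem cannot be invoked as a black box; here one must argue combinatorially that a tailored block pattern satisfies every window condition in the original time axis.
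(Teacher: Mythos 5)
Your lifting step (cycle the occurrences of a super-task of integer period $p_i=\lfloor a_i/m_i\rfloor$ through the $m_i$ identical tasks) is sound, but the reduction it relies on has an uncontrolled density inflation, and your plan for the regime where inflation matters does not work. Concretely, take the instance with periods $13,14,78$, each of multiplicity $5$: its density is $\frac{5}{13}+\frac{5}{14}+\frac{5}{78}\approx 0.806\le\frac56$, yet your reduced instance is $(\lfloor 13/5\rfloor,\lfloor 14/5\rfloor,\lfloor 78/5\rfloor)=(2,2,15)$, whose density is $\frac12+\frac12+\frac1{15}>1$, so it is not schedulable at all and neither the three-period core case nor any two-period black box can be invoked. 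Your fallback claim that large inflation forces the multiplicity to be large and hence ``pins $a_i$ down to a small range'' is false: inflation is governed by the ratio $a_i/m_i$, not by $a_i$, and the scaled family with periods $13q,14q,78q$ of multiplicity $5q$ each gives infinitely many instances with unbounded periods on which the floor reduction fails in exactly the same way. Handling precisely these instances is the actual content of the theorem (your multiplicity-one case analysis with $a\in\{2,3,\ge 4\}$ is the easy part), so the proposal has a genuine gap rather than a fixable detail.

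The way around this obstacle — and the route this paper takes — is to divide by the multiplicity \emph{exactly}: pass to the three-task instance with real (rational) periods $(c_1/q_1,c_2/q_2,c_3/q_3)$, whose density equals that of the original instance, prove schedulability of real-period three-task instances of density at most $\frac56$ by explicit schedules (Lemma~\ref{lem:real_3_exact}, established via Lemmas~\ref{lem:miyagi_1}, \ref{lem:miyagi_3_2-7}, \ref{lem:real_3_cover}, and \ref{lem:real_3_inclusion}), and then recover the original integer instance by Lemma~\ref{lem:partitioning}, which replaces a period $a_j$ by $q$ copies of $qa_j$. Note that this requires the stronger real-valued validity condition (at least $l$ occurrences in every $\lceil l a_j\rceil$ days) for the merged task, not merely the integer condition with a floored period, which is exactly what your floor-based super-task cannot provide. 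The paper's proof of Theorem~\ref{thm:integer_3_distinct} is thus a corollary of Theorem~\ref{thm:real_3_distinct}; your attempt stays inside the integer setting and runs into the obstacle that motivates the real-valued formulation in the first place.
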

We will see below that it seems hopeless to extend
Theorem~\ref{thm:integer_3_distinct} directly to the real-valued
problem.  (Conversely, our proof of Theorem~\ref{thm:real_3_distinct}
can be seen as a simpler proof of
Theorem~\ref{thm:integer_3_distinct}.)  The proof of
Theorem~\ref{thm:integer_3_distinct} in \cite{linlin97}
constructs a valid schedule for a given integer-valued instance
\begin{equation*}
  (\underbrace{c_{1}, \cdots, c_{1}}_{q_{1}}, \underbrace{c_{2},
  \cdots, c_{2}}_{q_{2}}, \underbrace{c_{3}, \cdots, c_{3}}_{q_{3}}).
\end{equation*}
In the schedule, for each $i = 1, 2,$ and $3$, identify all the tasks
with period~$c_{i}$ and rename them as task~$i$.  If the resulting
schedule were valid for the instance $(\frac{c_{1}}{q_{1}},
\frac{c_{2}}{q_{2}}, \frac{c_{3}}{q_{3}})$, this would give a solution
to the problem with rational periods.  However, this attempt fails.
We have a counterexample: the instance 
\begin{equation*}
  (\underbrace{13, \cdots, 13}_{5},
  \underbrace{14, \cdots, 14}_{5}, \underbrace{78, \cdots, 78}_{5}).
\end{equation*}
Run the construction for this instance and identify all the
5 tasks with period~14 in the resulting schedule.
It will turn out
that the identified task is not performed at a period of
$\frac{14}{5}$.

\cite{doi:10.1137/1.9781611977042.8} partially solved
Conjecture~\ref{conj:integer} for the case of 12 or fewer tasks.
The authors provided a
\emph{Pareto surface} of the whole set of such instances, where a
Pareto surface $C$ for a set of pinwheel scheduling instances $I$ is
an inclusion minimal set of pairs of an instance with minimum periods
and a schedule that is valid for the instance such that, for every
instance $A$ in $I$, $C$ includes a valid schedule for $A$.  We will
also consider this point of view in Sections~\ref{sec:2_periods_proof}
and~\ref{sec:3_periods_proof}.

We introduce a stronger conjecture recently made by Kawamura:
\begin{conjecture}[{\cite{Kawamura2025}}]
  If an integer-valued instance $A = (a_{i})_{i \in [k]}$ satisfies
  $D(A) \leq 1 - \frac{a_{1} - 1}{a_{1}(a_{1} + 1)}$, then $A$ is
  schedulable.
  \label{conj:real_a_1}
\end{conjecture}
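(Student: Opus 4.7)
The strategy is induction on the number of tasks $k$, using Theorem~\ref{thm:integer} to dispose of the base cases $a_1 \in \{2, 3\}$ (in both of which the conjectured bound $1 - (a_1-1)/(a_1(a_1+1))$ evaluates to $5/6$). For $a_1 \ge 4$ the bound strictly exceeds $5/6$, so a genuinely new argument is required. Reorder the periods so that $a_1 \le a_2 \le \cdots \le a_k$, and argue by induction on $k$, with the inductive hypothesis being not Conjecture~\ref{conj:real_a_1} itself but its natural real-valued strengthening: that every real-valued instance of density at most $1 - (a_1-1)/(a_1(a_1+1))$ (where $a_1$ is the smallest real period) is schedulable in the sense of Conjecture~\ref{conj:real}.

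The central reduction is to pre-commit to scheduling one task of period $a_1$ at positions $t \equiv 0 \pmod{a_1}$, consuming density $1/a_1$, and to pack the remaining $k-1$ tasks into the gap slots $\{t : t \not\equiv 0 \pmod{a_1}\}$. The residual density is at most
\[
1 - \frac{a_1-1}{a_1(a_1+1)} - \frac{1}{a_1} = \frac{a_1-1}{a_1+1},
\]
and compressing the gap slots into a consecutive time axis multiplies this by $a_1/(a_1-1)$, yielding a compressed density of at most $1 - 1/(a_1+1)$, with effective periods $a_i(a_1-1)/a_1$ for $i \ge 2$. In the critical case $a_2 = a_1+1$ the new minimum effective period is $a_1 - 1/a_1$, and a direct calculation shows $1 - 1/(a_1+1) \le 1 - (a_1'-1)/(a_1'(a_1'+1))$ for $a_1' = a_1 - 1/a_1$ (using that $a_1^3 \ge -1$), and more generally for any $a_1' \ge a_1 - 1/a_1$ by monotonicity of $x \mapsto (x-1)/(x(x+1))$ past $1+\sqrt{2}$. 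When several tasks share the period $a_1$, peel them off at distinct residues modulo $a_1$ and iterate the same compression. Applying the inductive hypothesis to the compressed sub-instance yields a valid schedule, which is then lifted back to the gap slots and interleaved with the pre-committed period-$a_1$ task.

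The main obstacles are threefold. First, the real-valued strengthening of Conjecture~\ref{conj:real_a_1} that the induction is carrying is itself open; a sensible path is to first prove it for instances of at most three distinct period values (extending Theorem~\ref{thm:real_3_distinct}) and then bootstrap. Second, the lifting step is delicate: as the paper cautions before Conjecture~\ref{conj:real}, naive identification or compression of tasks can break the real-valued validity at higher values of $l$, so one must verify that the lifted schedule satisfies the $l$-fold conditions for every positive integer $l$, not merely the single-appearance condition. Third, the compressed minimum period $a_1'$ is strictly smaller than $a_1$ in the critical case, so induction on $a_1$ does not close on its own---only induction on $k$, combined with the real-valued hypothesis, can. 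The tight family $(a_1, \underbrace{a_1+1, \ldots, a_1+1}_{a_1-1}, N)$ is the natural guide, and any proposed construction must at least produce valid schedules for this family, suggesting a template in which the single period-$a_1$ task is followed by a rotating block of $a_1-1$ period-$(a_1+1)$ tasks, with the filler $N$ placed in the remaining slack.
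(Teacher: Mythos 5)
First, a point of orientation: the statement you were asked about is an open conjecture that the paper merely quotes from Kawamura's survey; the paper contains no proof of it (and its own main result, Theorem~\ref{thm:real_3_distinct}, only reaches density $\frac{5}{6}$ for three distinct real values). So the only question is whether your plan actually closes the conjecture, and it does not: it is a research programme with two essential unproven steps, one of which fails as stated. The first gap is the inductive hypothesis itself. You carry a real-valued strengthening of the conjecture (density threshold $1-(a_1-1)/(a_1(a_1+1))$ with $a_1$ the smallest \emph{real} period), but for $a_1\ge 4$ this threshold exceeds $\frac{5}{6}$, so nothing in the literature cited here — not Theorem~\ref{thm:real_2_density_1} (only $k\le 2$), not Theorem~\ref{thm:real_3_distinct} (only $\frac56$), not Theorem~\ref{thm:real_a_1} (only $a_1>103$) — supplies a base case beyond two tasks. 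Induction on $k$ is structurally fine, but with no proven small-$k$ case at your threshold and an unproven step, nothing is established.

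The second and more concrete gap is the compression-and-lift step. Pre-committing task $1$ to $t\equiv 0\ (\mathrm{mod}\ a_1)$ and giving the remaining tasks compressed periods $b_i=a_i(a_1-1)/a_1$ requires, to lift validity back, that every window of $\lceil l a_i\rceil$ consecutive original days contain at least $\lceil l b_i\rceil$ gap slots, i.e.
\begin{equation}
  \Bigl\lceil l a_i \Bigr\rceil-\Bigl\lceil \tfrac{\lceil l a_i\rceil}{a_1} \Bigr\rceil \;\ge\; \Bigl\lceil l a_i\tfrac{a_1-1}{a_1} \Bigr\rceil ,
\end{equation}
and this fails already for $a_1=4$, $a_i=5$, $l=1$: the window $\{0,1,2,3,4\}$ contains two task-$1$ days and only $3$ gap slots, while $\lceil 5\cdot\frac34\rceil=4$, so a compressed schedule valid for period $3.75$ need not put task $i$ into that window. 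Your density bookkeeping (multiplying by $a_1/(a_1-1)$) is an average statement; worst-case windows lose an extra slot to the ceiling, which is exactly the phenomenon the paper warns about with the $(2,7,7)$ identification example and the $(13,14,78)$ counterexample to naive transfer of integer constructions. Your inequality $f(a_1-1/a_1)\le \frac{1}{a_1+1}$ for $f(x)=(x-1)/(x(x+1))$ is correct (it reduces to $a_1^2-a_1+1\ge 0$, not to ``$a_1^3\ge-1$''), but that only shows the densities line up, not that a valid schedule can be lifted; the ``peel several period-$a_1$ tasks at distinct residues and iterate'' step inherits the same unproven lifting claim. As it stands, the proposal identifies a plausible attack route and the right tight family, but it does not constitute a proof.
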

\paragraph{The real-valued pinwheel scheduling problem.}
\cite{10.1145/3618260.3649757,KawamuraPNAS} extended the pinwheel
scheduling problem to real periods as one of the techniques for
proving Theorem~\ref{thm:integer}.  The following theorem is a partial
solution to Conjecture~\ref{conj:real}.
\begin{theorem}[{\cite{10.1145/3618260.3649757,KawamuraPNAS}}]
  If a real-valued instance $A$ contains at most two periods
  and satisfies $D(A) \leq 1$, then $A$ is schedulable.
  \label{thm:real_2_density_1}
\end{theorem}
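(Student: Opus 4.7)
The plan is to construct the schedule by canonical placement followed by taking the complement. Suppose the instance has $q_{1}$ tasks of period $a$ and $q_{2}$ tasks of period $b$; write $\alpha = q_{1}/a$ and $\beta = q_{2}/b$, so $\alpha + \beta \leq 1$. First, I place the $q_{1}$ A-tasks on the canonical Beatty-type set $T_{A} = \{\lceil i a/q_{1} \rceil : i \in \mathbb{N}\}$, cycling through them in round-robin order; then I assign the $q_{2}$ B-tasks round-robin on the complement $T_{B} = \mathbb{N} \setminus T_{A}$.

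Verifying that the A-schedule is valid is routine: by the elementary inequality $\lceil x + y \rceil \leq \lceil x \rceil + \lceil y \rceil$, the $l$-step gap of any fixed A-task equals $\lceil c + la \rceil - \lceil c \rceil \leq \lceil l a \rceil$ for a suitable constant $c$ depending on the task, which is exactly the real-period-$a$ validity condition.

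The heart of the argument is verifying the analogous condition for the B-tasks on the complement: writing $T_{B} = \{t^{B}_{1} < t^{B}_{2} < \dotsb\}$, one must show $t^{B}_{n + L q_{2}} - t^{B}_{n} \leq \lceil L b \rceil$ for every $n$ and every $L \in \mathbb{N}$. The intuition is that any window of $N$ days contains at most $\lceil \alpha N \rceil$ days of $T_{A}$, hence at least $N - \lceil \alpha N \rceil$ days of $T_{B}$, and the hypothesis $\alpha + \beta \leq 1$ makes the arithmetic just barely work. To carry this through I would use the exact counting formula $|T_{A} \cap [1, t]| = \lfloor \alpha t \rfloor$ (valid because $T_{A}$ is the canonical sequence of rate $\alpha$), together with the fact that both endpoints $t^{B}_{n}$ and $t^{B}_{n + L q_{2}}$ lie outside $T_{A}$ by construction, which saves the off-by-one that the ceiling function would otherwise cost.

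The main obstacle I expect is the tight case $\alpha + \beta = 1$, where the naive rounding loses one unit we cannot afford. There the saving must come from the endpoint-in-$T_{B}$ observation above, or alternatively by invoking Beatty's theorem: in the irrational case the complement of a canonical $\alpha$-sequence is itself a canonical $(1-\alpha)$-sequence, so the validity for B transfers directly, with a separate periodic construction of length $\operatorname{lcm}$-related to the denominators finishing the rational case. Once $\alpha + \beta = 1$ is settled, the strict inequality $\alpha + \beta < 1$ follows by monotonicity, for instance by padding with a dummy task of sufficiently large period to saturate the density and then dropping it from the resulting schedule.
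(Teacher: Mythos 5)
Note first that this paper does not prove Theorem~\ref{thm:real_2_density_1} at all: it is imported verbatim from Kawamura (cited as Theorem~2 of that work), so there is no in-paper proof to compare against. Judged on its own merits, your construction is sound and is of the same type as the Beatty-sequence technique that both Kawamura and this paper (in Lemma~\ref{lem:miyagi_1}, where one group of tasks is placed on a rounded arithmetic progression and the remaining tasks on its complement, invoking the Rayleigh/Beatty partition) rely on. Your verification of the $A$-side via $\lceil x+y\rceil\leq\lceil x\rceil+\lceil y\rceil$ is exactly right, and your counting formula $|T_A\cap[1,t]|=\lfloor \alpha t\rfloor$ is correct (it needs $a/q_1\geq 1$ for distinctness of the placed days, which follows from $D(A)\leq 1$). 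The one substantive remark is that the ``main obstacle'' you anticipate in the tight case $\alpha+\beta=1$ does not actually arise: a window of $w=\lceil Lb\rceil$ days contains at most $\lceil\alpha w\rceil$ $A$-days, and since $\alpha w\leq(1-\beta)w=w-\beta w\leq w-\beta Lb=w-Lq_2$ with $w-Lq_2$ an integer, one gets $\lceil\alpha w\rceil\leq w-Lq_2$ outright; hence every such window contains at least $Lq_2$ $B$-days, which are consecutive in index, so each $B$-task occurs at least $L$ times --- no endpoint trick, Beatty dichotomy, or separate rational/irrational case split is needed. The remaining loose ends (indexing the construction over $\mathbb{Z}$ rather than $\mathbb{N}$, and the degenerate single-period case $q_2=0$, where the complement days can be filled arbitrarily) are routine.
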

Theorem~\ref{thm:real_a_1} below can be seen as a partial solution to
Conjecture~\ref{conj:real} for the case where $a_{1} > 103$, since the
density threshold in the statement exceeds $\frac{5}{6}$ in this case.
\begin{theorem}[{\cite{10.1007/978-3-031-92935-9_12}}]
  If a real-valued instance $A = (a_{i})_{i \in [k]}$ satisfies $D(A)
  \leq 1 - \frac{1 + \ln 2}{1 + \sqrt{a_{1}}} - \frac{3}{2 a_{1}}$,
  then $A$ is schedulable.
  \label{thm:real_a_1}
\end{theorem}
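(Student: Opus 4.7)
The plan is to dispose of three-period instances by reducing to a small \emph{Pareto-boundary} family of hardest instances and exhibiting explicit periodic schedules for each. Write the input in grouped form $A = (c_1^{q_1}, c_2^{q_2}, c_3^{q_3})$ with $c_1 < c_2 < c_3$ and each $q_i \geq 1$. Shrinking any period only strengthens the real-valued validity condition, so it suffices to treat the boundary case in which $D(A) = 5/6$ and each $c_i$ is as small as the density equation and the strict ordering $c_1 < c_2 < c_3$ allow. A \emph{large-period} subcase can be peeled off first: if $c_3$ is so large that deleting the $c_3$-tasks drops the density to at most $1$, I invoke Theorem~\ref{thm:real_2_density_1} on the $(c_1, c_2)$-subinstance and inject the $c_3$-copies into regularly spaced slack days produced by the two-period schedule.

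What remains is the regime where $c_1$ is bounded. Since $q_1 / c_1 \leq 5/6$ and every $c_i \geq 1$, only finitely many multiplicity vectors $(q_1, q_2, q_3)$ are feasible, and within each vector the density equation pins down $c_3$ as a function of $c_2$, with $c_2$ ranging over a bounded interval. For each such case I intend to write down an explicit periodic schedule in the spirit of the Pareto-surface approach of G\k{a}sieniec et al.~\cite{doi:10.1137/1.9781611977042.8} and Section~\ref{sec:2_periods_proof} of the present paper: place the $q_1$ copies of period $c_1$ on an arithmetic progression of step close to $c_1 / q_1$, nest the $q_2$ copies of period $c_2$ into the longest residual gaps, and fill the remaining slots with the $q_3$ copies of period $c_3$. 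The schedule period is chosen as a small common integer multiple aligned with $\lceil c_1 \rceil$, $\lceil c_2 \rceil$, and $\lceil c_3 \rceil$.

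The genuine obstacle — absent in the integer case — is verifying the condition \emph{for every} $l \in \mathbb{N}$, namely that each task with period $a_i$ appears at least $l$ times in every window of length $\lceil l a_i \rceil$. The counterexample $(13^5, 14^5, 78^5)$ quoted in the introduction shows that integer-valid schedules need not satisfy this. To circumvent it I plan to prove a structural lemma saying: for a periodic schedule of period $p$ whose occurrences of a given task form an \emph{almost uniformly spaced} set, there is an explicit threshold $L = L(p, a_i)$ such that the real-valued condition for all $l \geq L$ follows automatically from the condition for $l < L$ together with the long-run density. This reduces verification in each remaining subcase to a finite, mechanical check, and the restriction to three \emph{distinct} periods together with $D(A) \leq 5/6$ is precisely what allows the proposed schedule templates to survive that finite check; I expect the tightest instances — ones resembling $(2, 3, c_3)$ with $c_3$ just above $6$ — to drive the hardest subcase of the verification.
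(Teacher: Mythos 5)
Your proposal does not address the statement it is supposed to prove. Theorem~\ref{thm:real_a_1} is a general result about real-valued instances with \emph{arbitrarily many} periods: for $A=(a_i)_{i\in[k]}$ with any $k$, schedulability is guaranteed whenever $D(A)\leq 1-\frac{1+\ln 2}{1+\sqrt{a_1}}-\frac{3}{2a_1}$, a threshold that depends on the smallest period $a_1$ and, for $a_1>103$, exceeds $\frac{5}{6}$ and tends to $1$ as $a_1\to\infty$. (In this paper the theorem is only cited, from \cite{10.1007/978-3-031-92935-9_12}; it is not the paper's own contribution.) What you have sketched is instead a proof of Theorem~\ref{thm:real_3_distinct}: you restrict to three distinct periods $c_1<c_2<c_3$, normalize the density to exactly $\frac{5}{6}$, and reduce to a finite family of Pareto-boundary instances with explicit periodic schedules. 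None of those moves is available for the actual statement: there is no bound on the number of distinct values, so the claim ``only finitely many multiplicity vectors $(q_1,q_2,q_3)$ are feasible'' has no counterpart; the density threshold is not $\frac{5}{6}$ and varies continuously with $a_1$, so tight instances like $(2,3,c_3)$ are irrelevant (for small $a_1$ the stated bound can even be negative, making the theorem vacuous there, and the interesting regime is large $a_1$ with many tasks); and a finite catalogue of hand-built schedules cannot cover instances with unboundedly many distinct periods.

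The form of the bound itself signals that a different technique is required: the $\ln 2$ and $\sqrt{a_1}$ terms arise from arguments that round the periods down to a structured (e.g.\ geometric or divisibility-respecting) family, losing a multiplicative factor tied to $\ln 2$, and then balance that loss against a $O(1/\sqrt{a_1})$-type slack reserved for the rounding and ceiling effects -- an analytic/grouping argument over all $k$, not a case analysis over a two-dimensional projection of three-period instances. Your planned structural lemma (deducing the validity condition for all $l\geq L$ from finitely many small $l$ plus long-run density) is a reasonable tool and echoes the verification style used in Lemmas~\ref{lem:miyagi_2} and~\ref{lem:miyagi_3_2-7} of this paper, but even granting it, your argument would at best reprove Theorem~\ref{thm:real_3_distinct}; it establishes nothing about Theorem~\ref{thm:real_a_1}. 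To prove the stated theorem you would need to start from its actual hypotheses: arbitrary $k$, the $a_1$-dependent density budget, and a scheduling construction (such as recursive partitioning via Lemma~\ref{lem:partitioning} combined with the two-period result, Theorem~\ref{thm:real_2_density_1}, or a rounding scheme) whose loss can be charged against the $\frac{1+\ln 2}{1+\sqrt{a_1}}+\frac{3}{2a_1}$ slack.
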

\paragraph{Other related problems.}
The pinwheel scheduling problem can be viewed as a task packing
problem.  Its dual problem, called the pinwheel \emph{covering}
problem, has recently been studied by
\cite{KAWAMURA2020195,10.1007/978-3-031-92935-9_12}.  The bamboo
garden trimming problem~(\cite{GASIENIEC2024103476,KawamuraPNAS}) is
an optimization problem closely related to the pinwheel scheduling
problem.  Refer to \cite{Kawamura2025} for further variants and
related problems.
\section{A Warm-Up: Case of Two Distinct Real Values}
\label{sec:2_periods_proof}
To demonstrate the idea of the proof of our main theorem,
Theorem~\ref{thm:real_3_distinct}, for a simpler case, we first
attempt to prove a weaker theorem for the case of two distinct real
values, Theorem~\ref{thm:real_2_distinct}.  We mention that this can
be obtained as a corollary of Theorem~\ref{thm:real_2_density_1} and
Lemma~\ref{lem:partitioning} below.
\begin{theorem}
  If a real-valued instance $A$ consists of two distinct values
  and satisfies $D(A) \leq \frac{5}{6}$, then $A$ is schedulable.
  \label{thm:real_2_distinct}
\end{theorem}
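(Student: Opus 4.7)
The plan is to follow the approach hinted at in the statement, deriving Theorem~\ref{thm:real_2_distinct} as a corollary of Theorem~\ref{thm:real_2_density_1} via an appropriate partitioning lemma. Suppose the instance $A$ has $p$ tasks of period $a$ and $q$ tasks of period $b$, where $a \neq b$ are the two distinct values. The central idea is to aggregate all $p$ copies of period $a$ into a single \emph{meta-task} of period $a/p$, and likewise the $q$ copies of $b$ into a meta-task of period $b/q$. Since $p/a \leq D(A) \leq 5/6 < 1$, we have $a/p > 1$, and similarly $b/q > 1$, so the reduced instance $(a/p, b/q)$ is a legitimate real-valued two-task instance, of density $p/a + q/b = D(A) \leq 5/6 \leq 1$. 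By Theorem~\ref{thm:real_2_density_1} it admits a valid schedule $S'$.

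To convert $S'$ into a valid schedule $S$ for the original instance $A$, I would use a round-robin dispatch: enumerate the slots assigned to the meta-task of period $a/p$ in $S'$ as $t_1 < t_2 < \cdots$, and assign $t_j$ to the task of index $((j-1) \bmod p) + 1$ among the $p$ original tasks of period $a$, and do the analogous for the $q$ tasks of period $b$. To check validity, I fix one original task of period $a$ and a positive integer $m$; I must show this task appears at least $m$ times in any window of $\lceil m a \rceil$ consecutive days. Applying the meta-task's validity condition with parameter $l = mp$ yields at least $mp$ meta-task occurrences in any $\lceil mp \cdot (a/p) \rceil = \lceil m a \rceil$ consecutive days. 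Since the $t_j$ are strictly increasing, the occurrences falling in such a window carry consecutive indices $j_0, j_0+1, \ldots, j_0+k-1$ with $k \geq mp$, so round-robin guarantees each residue class modulo $p$ receives at least $\lfloor k/p \rfloor \geq m$ of them. This establishes Lemma~\ref{lem:partitioning} and, via the reduction above, the theorem.

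The main obstacle is executing the round-robin verification carefully, since the converse direction of the partitioning lemma is known to fail — as the authors emphasize, identifying tasks in a valid schedule can break meta-task validity — so the reduction must strictly proceed from a schedule of the reduced instance to a schedule of the original. The technical core is the arithmetic identity $\lceil mp \cdot (a/p) \rceil = \lceil m a \rceil$ (which is immediate because $mp \cdot (a/p) = m a$ exactly), combined with the elementary fact that any $k$ consecutive integers contain at least $\lfloor k/p \rfloor$ members of each residue class modulo $p$. I expect these to go through cleanly, with no further case analysis required beyond what is already encapsulated in Theorem~\ref{thm:real_2_density_1}.
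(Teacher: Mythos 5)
Your proof is correct, but it takes the alternative route that the paper only mentions in passing rather than the one it actually carries out. You aggregate the $p$ copies of $a$ and $q$ copies of $b$ into the two-task instance $(\frac{a}{p},\frac{b}{q})$ of the same density, invoke Theorem~\ref{thm:real_2_density_1}, and then re-prove Lemma~\ref{lem:partitioning} (Kawamura's Lemma~3(2), which the paper cites without proof) by round-robin dispatch; the key steps --- $\lceil mp\cdot\frac{a}{p}\rceil=\lceil ma\rceil$, the occurrences in a window having consecutive indices, and any $k$ consecutive integers meeting each residue class mod $p$ at least $\lfloor k/p\rfloor$ times --- are all sound (only note that the enumeration of meta-task days should be indexed by $\mathbb{Z}$, not $\mathbb{N}$, since schedules are bi-infinite; the argument is unaffected). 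The paper instead proves the theorem by exhibiting three fixed periodic schedules $|111112|$, $|112|$, $|12|$ that cover all two-period instances of density exactly $\frac{5}{6}$ (Lemmas~\ref{lem:real_2_exact}--\ref{lem:real_2_cover}), combined with Lemmas~\ref{lem:monotonicity} and~\ref{lem:partitioning}. Your route is shorter and delegates the hard work to the stronger density-$1$ result for two periods, but it yields a schedule that depends on the (Beatty-type) construction inside Theorem~\ref{thm:real_2_density_1} and is therefore instance-sensitive; the paper's route buys instance-insensitive schedules, the Pareto-surface statement of Corollary~\ref{cor:pareto_2}, and --- its real purpose --- a template (exact-density projection, a finite list of explicit schedules, and a covering lemma) that generalizes to the three-value case of Theorem~\ref{thm:real_3_distinct}, where no analogue of Theorem~\ref{thm:real_2_density_1} is available.
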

The following two lemmas play an important role.  All schedulability
throughout this paper, except for that of some individual instances,
is derived from these lemmas.
Note in Lemma~\ref{lem:monotonicity} that
the same schedule remains valid for the new instance.
\begin{lemma}[{\cite{10.1145/3618260.3649757,KawamuraPNAS}}]
  If a schedule $S$ is valid for a real-valued instance $(a_{i})_{i
    \in [k]}$, then $S$ is valid also for the instance obtained by
  replacing some period~$a_{j}$ ($j \in [k]$) by a real $b \geq
  a_{j}$.
  \label{lem:monotonicity}
\end{lemma}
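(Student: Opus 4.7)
The plan is straightforward: the validity of $S$ with respect to each task depends only on counting occurrences of that task in windows whose lengths are determined solely by its own period. Replacing $a_{j}$ by a larger $b$ only enlarges these windows for task~$j$ while leaving the windows for every other task untouched, so the same occurrences that witness validity in the original instance continue to do so in the modified one.

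Concretely, I would fix an arbitrary $l \in \mathbb{N}$ and $m \in \mathbb{Z}$ and verify the validity condition for task~$j$ with the new period~$b$. From $b \geq a_{j}$ one gets $l b \geq l a_{j}$, and taking ceilings yields $\lceil l b \rceil \geq \lceil l a_{j} \rceil$, so the inclusion
\[
[m, m + \lceil l a_{j} \rceil) \cap \mathbb{Z} \;\subseteq\; [m, m + \lceil l b \rceil) \cap \mathbb{Z}
\]
holds. By the validity of $S$ for the original instance applied to task~$j$ with parameters $l$ and $m$, the smaller set contains at least $l$ integers $t$ with $S(t) = j$; these same integers lie in the larger set, establishing the required count for period~$b$. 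Since $l$ and $m$ were arbitrary, task~$j$ is served correctly under the new period.

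For every other task $i \neq j$, the period $a_{i}$ is unchanged, so its validity condition is literally the same as before and remains satisfied by $S$. Hence $S$ is valid for the modified instance, as required.

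There is really no obstacle here beyond naming the right monotonicity: the argument rests entirely on the fact that $x \mapsto \lceil l x \rceil$ is nondecreasing for each fixed $l \in \mathbb{N}$, which is immediate from monotonicity of multiplication and of the ceiling function. The only thing worth emphasising in the write-up is that the condition is separable across tasks, so modifying one period cannot damage the validity witnesses associated with any other task.
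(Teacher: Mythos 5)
Your argument is correct: the validity condition is separable across tasks, and for task~$j$ the monotonicity $\lceil l b\rceil \geq \lceil l a_{j}\rceil$ makes every window for the new period contain the corresponding window for the old one, so the original witnesses still serve. The paper itself gives no proof of this lemma---it is quoted from Kawamura's paper---and your proof is exactly the standard one-line monotonicity argument that the cited source uses, so there is nothing to add.
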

\begin{lemma}[{\cite{10.1145/3618260.3649757,KawamuraPNAS}}]
  If a real-valued instance $(a_{i})_{i \in [k]}$ is schedulable,
  so is the instance obtained by
  replacing some period~$a_{j}$ ($j \in [k]$) by $q$ copies of $q
  a_{j}$ for any $q \in \mathbb{N}$.
  \label{lem:partitioning}
\end{lemma}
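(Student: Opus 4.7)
The plan is to take the given valid schedule $S$ for $(a_{i})_{i \in [k]}$ and produce a schedule $S'$ for the new instance by reassigning only those days on which task $j$ is performed, distributing them round-robin among the $q$ new copies. Let $T_{j} = \{t \in \mathbb{Z} : S(t) = j\}$. Because $S$ must perform task $j$ at least once in every $a_{j}$ consecutive days, $T_{j}$ is unbounded in both directions, so we may enumerate it as $\cdots < t_{-1} < t_{0} < t_{1} < \cdots$ (with any choice of index origin). Label the $q$ new copies by $1, 2, \ldots, q$, and define $S'(t) = S(t)$ for every $t \notin T_{j}$ and $S'(t_{i}) = (i \bmod q) + 1$ for every $i \in \mathbb{Z}$.

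First I would verify that $S'$ remains valid for every task $i \in [k] \setminus \{j\}$. This is immediate: on days outside $T_{j}$ the schedule $S'$ agrees with $S$, and on days inside $T_{j}$ neither $S$ nor $S'$ performs task $i$, so the set of days on which task $i$ is performed is exactly the same under $S$ and $S'$.

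The main step is to check validity for each new copy $c \in \{1, \ldots, q\}$, which has period $qa_{j}$. Fix $l \in \mathbb{N}$ and $m \in \mathbb{Z}$, and consider the window $W = [m, m + \lceil l \cdot q a_{j} \rceil) \cap \mathbb{Z}$. Using the identity $\lceil l \cdot q a_{j} \rceil = \lceil (lq) \cdot a_{j} \rceil$ and applying the validity of $S$ for task $j$ with parameter $lq$, we obtain $|T_{j} \cap W| \geq lq$. The elements of $T_{j}$ that lie in $W$ form a block of consecutive indices $t_{s}, t_{s+1}, \ldots, t_{s+N-1}$ with $N \geq lq$. The residues of $s, s+1, \ldots, s+N-1$ modulo $q$ cover each residue class at least $\lfloor N/q \rfloor \geq l$ times, so copy $c$ is performed at least $l$ times in $W$, as required.

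I do not anticipate a serious obstacle: the whole argument is a round-robin distribution of the days in $T_{j}$ among the $q$ new copies, and the only algebraic observation needed is the identity $\lceil l \cdot q a_{j} \rceil = \lceil (lq) \cdot a_{j} \rceil$, which reduces the validity condition for the new period $qa_{j}$ at multiplicity $l$ to the validity condition for the old period $a_{j}$ at the inflated multiplicity $lq$.
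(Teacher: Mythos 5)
Your proof is correct: the round-robin distribution of the occurrences of task~$j$ among the $q$ copies, combined with the observation that $\lceil l\cdot qa_{j}\rceil=\lceil (lq)\cdot a_{j}\rceil$ and the pigeonhole count $\lfloor N/q\rfloor\geq l$ for $N\geq lq$ consecutive indices, establishes the validity of each copy with period $qa_{j}$, and the remaining tasks are untouched. The paper itself gives no proof, citing Kawamura's Lemma~3(2), and your argument is essentially the standard one used there, so there is nothing further to compare.
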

Lemma~\ref{lem:real_2_exact} below states that any real-valued
instance with two periods and a density of exactly $\frac{5}{6}$ is
schedulable.  Instead of such an instance $(a_{1},
\frac{1}{\frac{5}{6} - \frac{1}{a_{1}}})$ itself, we focus on the
value $a_{1} \in \mathbb{R}$ as the projection of the instance onto
the real line.  Note that the condition $\frac{6}{5} < a_{1} \leq
\frac{12}{5}$ in the statement ensures, without loss of generality,
that the periods are ordered in non-decreasing order.
Lemmas~\ref{lem:monotonicity}, \ref{lem:partitioning}, and
\ref{lem:real_2_exact} immediately imply
Theorem~\ref{thm:real_2_distinct}.
\begin{lemma}
  Every real-valued instance $A = (a_{1}, \frac{1}{\frac{5}{6} -
    \frac{1}{a_{1}}})$ with $\frac{6}{5} < a_{1} < \frac{12}{5}$
  satisfies at least one of the following conditions and the schedule
  $S$ defined there is valid for the instance $A$:

  (i)~If $\frac{6}{5} < a_{1} \leq \frac{3}{2}$, then,
  \begin{alignat}{2}
    S(t) & =
    \begin{cases}
      1,           &  t \equiv 0, 1, 2, 3, 4;\\
      2,           &  t \equiv 5 \;\; (\text{mod } 6),
    \end{cases}
    \label{eq:miyagi_2_1}
  \end{alignat}
  which is represented as $|111112|$.

  (ii)~If $\frac{3}{2} \leq a_{1} \leq 2$, then,
  \begin{alignat}{2}
    S(t) & =
    \begin{cases}
      1,           &  t \equiv 0, 1;\\
      2,           &  t \equiv 2 \;\; (\text{mod } 3),
    \end{cases}
    \label{eq:miyagi_2_2}
  \end{alignat}
  which is represented as $|112|$.

  (iii)~If $2 \leq a_{1} \leq 3$, then,
  \begin{alignat}{2}
    S(t) & =
    \begin{cases}
      1,           &  t \equiv 0;\\
      2,           &  t \equiv 1 \;\; (\text{mod } 2),
    \end{cases}
    \label{eq:miyagi_2_3}
  \end{alignat}
  which is represented as $|12|$.
  \label{lem:real_2_exact}
\end{lemma}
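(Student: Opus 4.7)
The plan is to prove each of the three subcases (i), (ii), (iii) independently, by a direct analysis of how the proposed periodic schedule satisfies the two validity conditions in the stated range. All three schedules share the same structural form: each has period $p \in \{6, 3, 2\}$ and consists of $p - 1$ consecutive occurrences of task~1 followed by a single occurrence of task~2. My first step is to record a basic counting fact: in any window of $W$ consecutive days, the maximum number of 2's is $\lceil W / p \rceil$, attained when the window begins at a 2-position, and hence the minimum number of 1's is $W - \lceil W / p \rceil$ while the minimum number of 2's is $\lfloor W / p \rfloor$.

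Next, I would translate the real-period validity condition into a clean threshold on each period. For task~1 with period $a_1$, validity reads $\lceil l a_1 \rceil - \lceil \lceil l a_1 \rceil / p \rceil \geq l$ for every $l \in \mathbb{N}$. Writing $W = \lceil l a_1 \rceil$ and using that ``$\lceil W/p \rceil \leq W - l$'' is equivalent (as both sides are integers) to $W \geq p l /(p - 1)$, this reduces to the uniform bound $\lceil l a_1 \rceil \geq p l /(p - 1)$ for every $l$, which in turn is equivalent to $a_1 \geq p /(p - 1)$. A parallel (simpler) argument reduces task~2 validity to $a_2 \geq p$. By Lemma~\ref{lem:monotonicity}, it then suffices to verify each schedule at its tightest pair of periods.

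Plugging these thresholds into each case: case~(i) with $p = 6$ requires $a_1 \geq \frac{6}{5}$ and $a_2 \geq 6$, and substituting $a_2 = 6 a_1 /(5 a_1 - 6)$ the latter becomes $a_1 \leq \frac{3}{2}$; case~(ii) with $p = 3$ yields $\frac{3}{2} \leq a_1 \leq 2$; case~(iii) with $p = 2$ yields $2 \leq a_1 \leq 3$. The three intervals together cover $\left( \frac{6}{5}, 3 \right]$, which amply contains the lemma's hypothesis range $\left( \frac{6}{5}, \frac{12}{5} \right)$ (the endpoint $\frac{12}{5}$ corresponding to the diagonal $a_1 = a_2$).

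The main technical step is the counting fact itself: confirming that starting a window at a 2-position really achieves the maximum number of 2's, and that the closed-form $\lceil W / p \rceil$ is tight. The ``consecutive block of 1's'' structure of each schedule makes this intuitive, but the argument should spell out the alignment analysis and ceiling calculation carefully. Once that is in place, the ceiling/floor manipulation that extracts the uniform threshold on $a_i$, and the final instantiation of each case, are short and routine.
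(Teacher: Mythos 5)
Your proposal is correct, but it reaches the lemma by a genuinely different route than the paper. The paper first verifies (Lemma~\ref{lem:miyagi_2}) that the three schedules are valid for the extremal instances $(\frac{6}{5},6)$, $(\frac{3}{2},3)$, $(2,2)$, using the propositions $P_i(y,l)$: an explicit check for small $l$ and a quotient--remainder composition argument for large $l$; it then applies the monotonicity-based covering Lemma~\ref{lem:real_2_cover} to get exactly the ranges \eqref{eq:real_2_exact_1}--\eqref{eq:real_2_exact_3} and observes they cover $(\frac{6}{5},\frac{12}{5})$. You instead exploit the special structure of the schedules $|1\cdots12|$ with period $p$: since task~2 occupies a single residue class mod $p$, any window of $W$ days contains between $\lfloor W/p\rfloor$ and $\lceil W/p\rceil$ twos, and the validity condition collapses, via the ceiling manipulation $\lceil W/p\rceil \leq W-l \Leftrightarrow W \geq pl/(p-1)$, to the clean thresholds $a_1 \geq p/(p-1)$ and $a_2 \geq p$, after which the instantiation $a_2 = \frac{1}{\frac{5}{6}-\frac{1}{a_1}}$ gives the same three intervals. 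Your counting argument is shorter, avoids the small-$l$/large-$l$ case split, and in fact yields an exact characterization of validity for this family of schedules (the paper's Lemma~\ref{lem:miyagi_2} falls out as a special case); only the sufficiency direction of your claimed equivalence $\lceil l a_1\rceil \geq pl/(p-1)\ \forall l \Leftrightarrow a_1 \geq p/(p-1)$ is actually needed, and that direction is immediate. The trade-off is generality: the paper's $P_i(y,l)$ machinery works for arbitrary explicit periodic schedules (and is reused in Lemma~\ref{lem:miyagi_3_2-7}, where tasks occupy several residues per period and your single-residue counting would no longer apply), whereas your argument is tailored to schedules in which each minority task sits in one arithmetic progression. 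Do spell out, as you note, that $\lceil W/p\rceil$ twos are indeed attained by a window starting at a 2-position, so that the threshold condition is exactly the worst case.
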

It does not matter that the ranges of $a_{1}$ in the cases~(i), (ii),
and (iii) are not exclusive and even include values out of
$\frac{6}{5} < a_{1} \leq \frac{12}{5}$.  The former fact
means that there may exist multiple valid schedules for a single
instance.

Now we are going to provide some lemmas for the proof of
Lemma~\ref{lem:real_2_exact}.  We have found by hand that the three
instances each with a density of~1 in the following
Lemma~\ref{lem:miyagi_2} are schedulable and each period is minimum
possible.
\begin{lemma}
  The schedules defined by~\eqref{eq:miyagi_2_1},
  \eqref{eq:miyagi_2_2}, and \eqref{eq:miyagi_2_3} are valid for
  the
  real-valued instances $(\frac{6}{5}, 6)$, $(\frac{3}{2}, 3)$, and
  $(2, 2)$, respectively.
  \label{lem:miyagi_2}
\end{lemma}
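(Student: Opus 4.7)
My plan is to verify the validity condition directly for each of the three schedules. In all three cases the schedule $S$ is periodic with some period $p$ (equal to $2$, $3$, $6$), task~$2$ occupies exactly one residue modulo $p$ and task~$1$ the remaining $p-1$ residues, and the two real periods are $a_{1} = p/(p-1)$ and $a_{2} = p$. These common features make it possible to handle all three cases with a single counting argument parameterized by $p$.

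Validity for task~$2$ is immediate: $\lceil l a_{2} \rceil = lp$, and any window of length $lp$, regardless of alignment, contains exactly $l$ days congruent to the distinguished residue of task~$2$ modulo $p$.

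For task~$1$, I would write $l = (p-1)k + r$ with $0 \le r < p-1$. A window of length
\[
\lceil l a_{1} \rceil \;=\; \left\lceil \tfrac{lp}{p-1} \right\rceil \;=\; kp + s, \qquad s := \left\lceil \tfrac{rp}{p-1} \right\rceil,
\]
decomposes into a block of length $kp$ containing exactly $(p-1)k$ task-$1$ days, plus a tail of length $s$. It therefore suffices to show the tail contributes at least $r$ task-$1$ days. If $r = 0$, the tail is empty and we are done. If $1 \le r \le p-2$, then $s = r + \lceil r/(p-1)\rceil = r+1$, so the tail has length $r+1 \le p-1$; since task-$2$ days are spaced exactly $p$ apart, the tail contains at most one task-$2$ day and hence at least $r$ task-$1$ days.

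I do not anticipate any real obstacle: the proof is a direct count resting on two elementary facts, namely that any $p$ consecutive days cover all residues modulo $p$ exactly once, and that any window shorter than $p$ contains at most one day of a given residue. The only bookkeeping to watch is the identity $\lceil rp/(p-1) \rceil = r + 1$ for $1 \le r \le p-2$, which makes the tail's length just barely large enough to accommodate the required $r$ task-$1$ days while still being short enough to contain at most one task-$2$ day. Note that for $p = 2$ the case $r \ge 1$ is vacuous, so the argument reduces to the trivial counting used for task~$2$.
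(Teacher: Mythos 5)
Your proof is correct and follows essentially the same route as the paper's: write $l$ as quotient and remainder with respect to the number of task-1 slots per period, split the window of length $\lceil l a_{1}\rceil$ into full periods plus a short tail, and count task-1 days in each piece (task 2 being the easy exact count). The only difference is presentational: you treat all three schedules uniformly via the parameter $p$ and make the ceiling identity $\lceil rp/(p-1)\rceil = r+1$ explicit, whereas the paper verifies the representative case $(\frac{6}{5},6)$ and notes the others are analogous.
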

\begin{proof}
  It is immediate to see that each of the schedules satisfies the
  condition for the tasks having an integer period.  All that remains
  is to confirm the schedulability of the tasks having a rational
  period.

  We first demonstrate that $S$ in~\eqref{eq:miyagi_2_1} satisfies the
  condition for task~1 with period~$\frac{6}{5}$ in the instance
  $(\frac{6}{5}, 6)$.  We denote by $P_{i}(y, l)$ the proposition that
  for each $m \in \mathbb{Z}$, there exist at least $l$ values of $t
  \in [m, m + y) \cap \mathbb{Z}$ such that $S(t) = i$, which means
    that the schedule $S$ performs task $i$ at least $l$ times in any
    $y$ consecutive days.  The schedulability of task~1 in the
    instance $(\frac{6}{5}, 6)$ is represented as $P_{1}(\lceil l
    \cdot \frac{6}{5} \rceil, l)$ for all $l \in \mathbb{N}$. We are
    going to prove this.

  For each $l = 1, 2, \ldots, 5$, we have $\lceil l \cdot \frac{6}{5}
  \rceil = l + 1$.  According to \eqref{eq:miyagi_2_1}, $S(t) = 1$ if
  $t \equiv 0, 1, 2, 3, 4$ $(\text{mod } 6)$.
  It is clear that every $l + 1$ consecutive days,
  task~1 is performed at least $l$ times.
  That is, $P_{1}(\lceil l \cdot \frac{6}{5} \rceil, l)$ is true.

  For $l \geq 6$, let $q$ be the quotient and $r$ be the remainder
  when $l$ is divided by 5. We have $\lceil l \cdot \frac{6}{5} \rceil
  = \lceil (5q + r) \cdot \frac{6}{5} \rceil = 6q + \lceil r \cdot
  \frac{6}{5} \rceil$.  Using what we have already shown, we obtain that:
  Task~1 is
  done at least $r$ times in any $\lceil r \cdot
  \frac{6}{5} \rceil$ consecutive days by $P_{1}(\lceil z \cdot \frac{6}{5}
  \rceil, z)$ for $z = 1, 2, 3, 4$, while it is done at least $5q$
  times in any $6q$ consecutive days by $P_{1}(\lceil 5 \cdot
  \frac{6}{5} \rceil, 5)$.
  In total, task~1 is done at least $5q + r
  = l$ times in any $6q + \lceil r \cdot \frac{6}{5} \rceil$ consecutive days,
  which implies $P_{1}(\lceil l \cdot \frac{6}{5} \rceil, l)$.

  We next check
  task~1 with period~$\frac{3}{2}$ in the instance $(\frac{3}{2}, 3)$
  for $S$ in~\eqref{eq:miyagi_2_2} in the same way.
  Using the notation $P_{i}(y, l)$ again,
  we can state that our goal is to show
  $P_{1}(\lceil l \cdot \frac{3}{2} \rceil, l)$ for all $l \in
  \mathbb{N}$.

  For $l = 1$ or $2$, we have $\lceil l \cdot \frac{3}{2} \rceil = l +
  1$.  According to \eqref{eq:miyagi_2_2}, $S(t) = 1$ if $t \equiv 0,
  1$ $(\text{mod } 3)$.
  Clearly, every $l + 1$ consecutive
  days, task~1 is performed at least $l$ times.  That is,
  $P_{1}(\lceil l \cdot \frac{3}{2} \rceil, l)$ is true.

  For $l \geq 3$, let $q$ be the quotient and $r$ be the remainder
  when $l$ is divided by 2. We have $\lceil l \cdot \frac{3}{2} \rceil
  = \lceil (2q + r) \cdot \frac{3}{2} \rceil = 3q + \lceil r \cdot
  \frac{3}{2} \rceil$.  We obtain that: Task~1 is done at least $r$
  times in any $\lceil r \cdot \frac{3}{2} \rceil$ consecutive days by
  $P_{1}(\lceil 1 \cdot \frac{3}{2} \rceil, 1)$, while it is done at
  least $2q$ times in any $3q$ consecutive days by $P_{1}(\lceil 2
  \cdot \frac{3}{2} \rceil, 2)$.  Therefore, task~1 is done at least
  $2q + r = l$ times in any $3q + \lceil r \cdot \frac{3}{2} \rceil$
  consecutive days, which implies $P_{1}(\lceil l \cdot \frac{3}{2}
  \rceil, l)$.
\end{proof}

Using the following helper lemma, we can further derive a set of
schedulable instances from those found to be schedulable by
Lemma~\ref{lem:miyagi_2}.  Note that the resulting set includes
instances whose periods are not in non-decreasing order.
The lemma follows immediately from Lemma~\ref{lem:monotonicity}.
\begin{lemma}
  If a schedule $S$ is valid for
  the
  real-valued instance $(b_{1}, b_{2})$,
  then $S$ is valid also for
  the
  real-valued instance $(a_{1}, \frac{1}{\frac{5}{6} -
    \frac{1}{a_{1}}})$ with $a_{1} \geq b_{1}$ and
  $\frac{1}{\frac{5}{6} - \frac{1}{a_{1}}} \geq b_{2}$.
  \label{lem:real_2_cover}
\end{lemma}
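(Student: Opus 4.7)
The plan is to apply Lemma~\ref{lem:monotonicity} twice in succession, since the present statement is nothing more than a two-variable version of the monotonicity principle, specialized to the parametric family of ``tight'' instances on the line $\frac{1}{a_1}+\frac{1}{a_2}=\frac{5}{6}$.

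Concretely, I start from the hypothesis that $S$ is valid for $(b_1,b_2)$. The first invocation of Lemma~\ref{lem:monotonicity}, with $j=1$ and new period $a_1\ge b_1$, yields that $S$ remains valid for $(a_1,b_2)$. The second invocation, with $j=2$ and new period $\frac{1}{\frac{5}{6}-\frac{1}{a_1}}\ge b_2$ (which is exactly the assumption), yields that $S$ is valid for $\bigl(a_1,\frac{1}{\frac{5}{6}-\frac{1}{a_1}}\bigr)$, as required. No arithmetic beyond checking that each replacement enlarges the corresponding period is needed, and both such checks are the very hypotheses of the lemma.

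There is no genuine obstacle here: the content of the statement is just the observation that once we have a valid schedule for some corner instance $(b_1,b_2)$, the monotonicity of validity in each period lets us ``push'' that same schedule to any point of the tight curve that dominates $(b_1,b_2)$ coordinatewise. This is exactly why Lemma~\ref{lem:real_2_exact} can be proved by establishing schedulability only at the three extremal instances $(\tfrac{6}{5},6)$, $(\tfrac{3}{2},3)$, $(2,2)$ in Lemma~\ref{lem:miyagi_2}: the present lemma transports each of these schedules to the full corresponding subinterval of $a_1$.
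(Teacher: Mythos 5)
Your proposal is correct and matches the paper's argument: the paper simply notes that Lemma~\ref{lem:real_2_cover} follows immediately from Lemma~\ref{lem:monotonicity}, which is exactly your two successive applications of the period-replacement step. Nothing further is needed.
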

\begin{proof}[of Lemma~\ref{lem:real_2_exact}]
  Applying Lemma~\ref{lem:real_2_cover}, the schedulability of the
  instance $(\frac{6}{5}, 6)$ by Lemma~\ref{lem:miyagi_2} leads us to
  the fact that the schedule defined by~\eqref{eq:miyagi_2_1} is valid
  for the instance $(a_{1}, \frac{1}{\frac{5}{6} - \frac{1}{a_{1}}})$
  if $a_{1} \geq \frac{6}{5}$ and $\frac{1}{\frac{5}{6} -
    \frac{1}{a_{1}}} \geq 6$, equivalently,
  \begin{equation}
    \frac{6}{5} < a_{1} \leq \frac{3}{2}.
    \label{eq:real_2_exact_1}
  \end{equation}
  Similarly, we derive that the schedule defined
  by~\eqref{eq:miyagi_2_2} is valid for the instance $(a_{1},
  \frac{1}{\frac{5}{6} - \frac{1}{a_{1}}})$ if
  \begin{equation}
    \frac{3}{2} \leq a_{1} \leq 2,
    \label{eq:real_2_exact_2}
  \end{equation}
  and that the schedule defined by~\eqref{eq:miyagi_2_3} is valid for
  the instance $(a_{1}, \frac{1}{\frac{5}{6} - \frac{1}{a_{1}}})$ if
  \begin{equation}
    2 \leq a_{1} \leq 3.
    \label{eq:real_2_exact_3}
  \end{equation}
  It is obvious that $a_{1}$ such that $\frac{6}{5} <
  a_{1} < \frac{12}{5}$ satisfies at least one of
  \eqref{eq:real_2_exact_1}, \eqref{eq:real_2_exact_2}, or
  \eqref{eq:real_2_exact_3}.
\end{proof}

We focus on real-valued instances with two periods.  The proof of
Theorem~\ref{thm:real_2_distinct} based on
Lemma~\ref{lem:real_2_exact} tells us that for any real-valued
instance with two periods and density less than or equal to
$\frac{5}{6}$, the schedule defined by either \eqref{eq:miyagi_2_1},
\eqref{eq:miyagi_2_2}, or \eqref{eq:miyagi_2_3} is valid.  Note that
each schedule is instance-insensitive in the sense that the same
schedule remains valid as long as either \eqref{eq:real_2_exact_1},
\eqref{eq:real_2_exact_2}, or \eqref{eq:real_2_exact_3} holds for the
instance.

Consider the above results in terms of the Pareto surface proposed by
\cite{doi:10.1137/1.9781611977042.8}.  They showed that $\{((2, 2),
|12|)\}$ is a Pareto surface of the set of all \emph{integer-valued}
instances with two periods and density less than or equal to
$\frac{5}{6}$.  In contrast, we have the following corollary for
\emph{real-valued} instances.  One can see that adding two
instance-schedule pairs to the Pareto surface for integer-valued
instances yields a Pareto surface for real-valued instances.
\begin{corollary}
  $\{((\frac{6}{5}, 6), |111112|), ((\frac{3}{2}, 3), |112|), ((2, 2),
  |12|)\}$ is a Pareto surface of the set of all real-valued instances
  with two periods and density less than or equal to $\frac{5}{6}$.
  \label{cor:pareto_2}
\end{corollary}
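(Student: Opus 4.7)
The plan is to verify the three defining properties of a Pareto surface in the sense of Section~\ref{subsec:previous_results}: \emph{(a) minimum periods} --- each pair consists of an instance whose periods are individually minimum for the paired schedule; \emph{(b) coverage} --- every instance in $I$ (two periods, density $\leq\tfrac{5}{6}$) admits a valid schedule from $C$; and \emph{(c) inclusion minimality} --- no pair can be removed from $C$ without losing coverage.

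For (a), validity of each pair is Lemma~\ref{lem:miyagi_2}. To see that the periods are minimum, I would compute the asymptotic rate at which each task symbol appears in each schedule: in $|111112|$ these are $\tfrac{5}{6}$ and $\tfrac{1}{6}$, in $|112|$ they are $\tfrac{2}{3}$ and $\tfrac{1}{3}$, and in $|12|$ they are $\tfrac{1}{2}$ each. Letting $l\to\infty$ in the real-valued validity condition forces any valid schedule for a task with period $a$ to perform that task at asymptotic rate at least $\tfrac{1}{a}$, so lowering any coordinate of $(\tfrac{6}{5},6)$, $(\tfrac{3}{2},3)$, or $(2,2)$ below its listed value would destroy validity of the paired schedule.

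For (b), given $A=(a_1,a_2)\in I$ with $a_1\leq a_2$, the density bound forces $a_1>\tfrac{6}{5}$ and $a_2\geq \tfrac{1}{5/6-1/a_1}$. The three-range case split on $a_1$ already used in the proof of Lemma~\ref{lem:real_2_exact} shows that $A$ dominates one of the three representatives componentwise, and Lemma~\ref{lem:monotonicity} then lifts the corresponding schedule's validity from the representative up to $A$.

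For (c), I would exhibit a distinguishing instance for each pair. Since task-$i$ validity depends only on the positions of task~$i$ in $S$, a schedule with minimum-period witness $(b_1,b_2)$ is valid for $(a_1,a_2)$ exactly when $(a_1,a_2)\geq(b_1,b_2)$ componentwise. For small $\epsilon>0$, let $A_j$ be the density-$\tfrac{5}{6}$ instance whose first coordinate is $\tfrac{6}{5}+\epsilon$, $\tfrac{3}{2}+\epsilon$, or $2+\epsilon$ for $j=1,2,3$ respectively; each lies in $I$. A brief componentwise check shows that $A_j$ is dominated by exactly the $j$-th representative (the first coordinate rules out all later representatives, and the second coordinate rules out all earlier ones), so only the $j$-th pair covers $A_j$. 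I foresee no substantive obstacle; the corollary essentially repackages Lemmas~\ref{lem:monotonicity}--\ref{lem:real_2_cover} and the proof of Lemma~\ref{lem:real_2_exact}. The only conceptual pitfall worth flagging is that the three representatives themselves have density $1>\tfrac{5}{6}$ and so are \emph{not} members of $I$, which is permitted by the definition --- they serve only as componentwise floors for the schedules' ranges of applicability.
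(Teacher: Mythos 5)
Your proposal is correct and follows essentially the same route as the paper: coverage comes from the case split of Lemma~\ref{lem:real_2_exact} together with Lemma~\ref{lem:monotonicity} (i.e.\ Lemma~\ref{lem:real_2_cover}), and validity of the three pairs is Lemma~\ref{lem:miyagi_2}. Your asymptotic-rate argument for period minimality and the witness instances for inclusion minimality merely make explicit what the paper asserts (``each period is minimum possible'') or leaves implicit, and they are sound.
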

\section{Proof of Theorem~\ref{thm:real_3_distinct}}
\label{sec:3_periods_proof}
Theorem~\ref{thm:real_3_distinct} is proved in essentially the same
way as Theorem~\ref{thm:real_2_distinct}.  The following
Lemma~\ref{lem:real_3_exact} is the heart of the argument.  It states
that any real-valued instance with three periods and a density of exactly
$\frac{5}{6}$ is schedulable.  We deal with a pair $(a_{1}, a_{2})$ as
a projection of such an instance.  Note that the conditions $a_{1} <
a_{2}$ and $\frac{5}{6} \leq \frac{1}{a_{1}} + \frac{2}{a_{2}}$ in the
statement guarantee, without loss of generality, that the periods are
ordered in non-decreasing order.  Lemmas~\ref{lem:monotonicity},
\ref{lem:partitioning}, and \ref{lem:real_3_exact} immediately imply
Theorem~\ref{thm:real_3_distinct}.
\begin{lemma}
  Every real-valued instance $A = (a_{1}, a_{2}, \frac{1}{\frac{5}{6}
    - \frac{1}{a_{1}} - \frac{1}{a_{2}}})$ with $a_{1} < a_{2}$,
  $\frac{5}{6} < \frac{1}{a_{1}} + \frac{2}{a_{2}}$, and
  $\frac{1}{a_{1}} + \frac{1}{a_{2}} < \frac{5}{6}$ satisfies at least
  one of the following conditions and the schedule $S$ defined there
  is valid for the instance $A$:

  (I)~If $\frac{1}{a_{1}} +  \frac{2}{a_{2}} \leq 1$, then,
  letting $a_{2}' = \frac{2}{1 - \frac{1}{a_{1}}}$,
  \begin{alignat}{2}
    S(t) & =
    \begin{cases}
      1,           &  t \in
      \{\lceil j a_{1} \rceil - 1\;|\; j \in \mathbb{Z}\};\\
      2,           &  t \in
      \{\lfloor j a_{2}' \rfloor \;|\; j \in \mathbb{Z}\};\\
      3,           &  t \in
      \{\lfloor (j + \frac{1}{2}) a_{2}' \rfloor \;|\;
      j \in \mathbb{Z}\}.
      \label{eq:miyagi_3_1}
    \end{cases}
  \end{alignat}

  (II)~If $a_{1} \geq \frac{3}{2}$, $a_{2} > 5$, and $\frac{5}{6} -
  \frac{1}{9} \leq \frac{1}{a_{1}} + \frac{1}{a_{2}} < \frac{5}{6}$
  then
  \begin{alignat}{2}
    S(t) & =
    \begin{cases}
      1,           &  t \equiv 0, 1, 3, 4, 6, 7;\\
      2,           &  t \equiv 2, 5;\\
      3,           &  t \equiv 8 \;\; (\text{mod } 9),
    \end{cases}
    \label{eq:miyagi_3_2}
  \end{alignat}
  which is represented as $|112112113|$.

  (III)~If $a_{1} \geq \frac{11}{7}$, $a_{2} > 4$, and $\frac{5}{6} -
  \frac{1}{11} \leq \frac{1}{a_{1}} + \frac{1}{a_{2}} < \frac{5}{6}$
  then
  \begin{alignat}{2}
    S(t) & =
    \begin{cases}
      1,           &  t \equiv 0, 1, 3, 4, 6, 7, 9;\\
      2,           &  t \equiv 2, 5, 8;\\
      3,           &  t \equiv 10 \;\; (\text{mod } 11),
    \end{cases}
    \label{eq:miyagi_3_3}
  \end{alignat}
  which is represented as $|11211211213|$.

  (IV)~If $a_{1} \geq \frac{12}{7}$, $a_{2} > 3$, and $\frac{5}{6} -
  \frac{1}{12} \leq \frac{1}{a_{1}} + \frac{1}{a_{2}} < \frac{5}{6}$
  then,
  \begin{alignat}{2}
    S(t) & =
    \begin{cases}
      1,           &  t \equiv 0, 2, 3, 5, 7, 8, 10;\\
      2,           &  t \equiv 1, 4, 6, 9;\\
      3,           &  t \equiv 11 \;\; (\text{mod } 12),
    \end{cases}
    \label{eq:miyagi_3_4}
  \end{alignat}
  which is represented as $|121121211213|$.

  (V)~If $a_{1} > 2$, $a_{2} \geq 3$, and $\frac{5}{6} - \frac{1}{6}
  \leq \frac{1}{a_{1}} + \frac{1}{a_{2}} < \frac{5}{6}$ then
  \begin{alignat}{2}
    S(t) & =
    \begin{cases}
      1,           &  t \equiv 0, 2, 3;\\
      2,           &  t \equiv 1, 4;\\
      3,           &  t \equiv 5 \;\; (\text{mod } 6),
    \end{cases}
    \label{eq:miyagi_3_5}
  \end{alignat}
  which is represented as $|121123|$.

  (VI)~If $a_{1} > 2$, $a_{2} \geq \frac{12}{5}$, and $\frac{5}{6} -
  \frac{1}{12} \leq \frac{1}{a_{1}} + \frac{1}{a_{2}} < \frac{5}{6}$,
  then
  \begin{alignat}{2}
    S(t) & =
    \begin{cases}
      1,           &  t \equiv 0, 2, 4, 5, 7, 9;\\
      2,           &  t \equiv 1, 3, 6, 8, 10;\\
      3,           &  t \equiv 11 \;\; (\text{mod } 12),
    \end{cases}
    \label{eq:miyagi_3_6}
  \end{alignat}
  which is represented as $|121211212123|$.

  (VII)~If $a_{1} \geq \frac{12}{5}$, $a_{2} \geq \frac{12}{5}$, and
  $\frac{5}{6} - \frac{1}{6} \leq \frac{1}{a_{1}} + \frac{1}{a_{2}} <
  \frac{5}{6}$ then
  \begin{alignat}{2}
    S(t) & =
    \begin{cases}
      1,           &  t \equiv 0, 2, 4, 7, 9;\\
      2,           &  t \equiv 1, 3, 6, 8, 10;\\
      3,           &  t \equiv 5, 11 \;\; (\text{mod } 12),
    \end{cases}
    \label{eq:miyagi_3_7}
  \end{alignat}
  which is represented as $|121213212123|$.
  \label{lem:real_3_exact}
\end{lemma}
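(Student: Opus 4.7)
The plan is to mirror the structure of the proof of Lemma~\ref{lem:real_2_exact}. The key reduction is a three-period analogue of Lemma~\ref{lem:real_2_cover}: by Lemma~\ref{lem:monotonicity}, if a schedule $S$ is valid for $(b_1, b_2, b_3)$, then $S$ remains valid for the target instance $(a_1, a_2, 1/(5/6 - 1/a_1 - 1/a_2))$ whenever $a_1 \geq b_1$, $a_2 \geq b_2$, and $1/(5/6 - 1/a_1 - 1/a_2) \geq b_3$. Each of the six fixed periodic schedules in cases (II)--(VII) therefore reduces to exhibiting a single ``tight'' instance for which that schedule is valid, with the three case-conditions on $a_1$, $a_2$, and $5/6 - 1/a_1 - 1/a_2$ translating directly into the three componentwise inequalities on $b_1$, $b_2$, $b_3$.

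For each of (II)--(VII), the tight lower bounds $b_i$ can be read off the schedule by analyzing the worst-case $\lceil l b_i \rceil$-window for task~$i$. For example, in (II) with $|112112113|$, task~1 occupies six of nine positions so that the bound $b_1 \geq 3/2$ is forced by $l = 6$ (at $a_1 = 3/2$ one has $\lceil 6 a_1 \rceil = 9$ and the schedule achieves exactly six~$1$s in every nine days); task~2 occupies two positions with maximum gap~$6$, forcing $b_2 > 5$; task~3 occupies one position per period of length~$9$, forcing $b_3 \geq 9$. These precisely match the hypotheses of (II). An analogous gap-and-frequency analysis produces the thresholds in (III)--(VII). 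Validity of each schedule at its tight instance is then checked in the style of Lemma~\ref{lem:miyagi_2}: verify $P_i(\lceil l b_i \rceil, l)$ directly for $l$ up to the number of occurrences of task~$i$ in one period of the schedule, and extend to all $l \in \mathbb{N}$ by the division-with-remainder argument used there.

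Case (I) is the genuinely different one, because the schedule itself is parametric in $a_1$ and in $a_2' = 2/(1 - 1/a_1)$. Since $1/a_1 + 2/a_2' = 1$ by definition of $a_2'$, the three Beatty-like sequences defining tasks~1, 2, and~3 must partition $\mathbb{Z}$; I would establish this partition property first, then verify real-valued validity task by task. Task~1 at positions $\{\lceil j a_1 \rceil - 1 : j \in \mathbb{Z}\}$ satisfies the validity condition by a standard ceiling-sequence inequality. Tasks~2 and~3 at positions $\{\lfloor j a_2' \rfloor\}$ and $\{\lfloor (j + 1/2) a_2' \rfloor\}$ each have consecutive spacings in $\{\lfloor a_2' \rfloor, \lceil a_2' \rceil\}$, so they satisfy validity for period~$a_2'$; by Lemma~\ref{lem:monotonicity} this suffices for the required periods $a_2 \geq a_2'$ and $a_3 \geq a_2'$ (the latter inequality being equivalent to the case-(I) hypothesis $1/a_1 + 2/a_2 \leq 1$).

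The final step is a routine covering argument showing that every $(a_1, a_2)$ admitted by the lemma's hypotheses lies in at least one of the seven regions; this is a case analysis on $a_1$ split at the thresholds $3/2$, $11/7$, $12/7$, $2$, and $12/5$ appearing in the statement, paralleling the split in the proof of Lemma~\ref{lem:real_2_exact}. The main obstacle is case~(I): unlike the periodic schedules in (II)--(VII), it cannot be reduced to a finite check on one period and instead requires a uniform analysis of floor and ceiling sequences, including the verification that the three sequences really do partition $\mathbb{Z}$. I would expect cases (II)--(VII) and the final covering step to be essentially bookkeeping, and case~(I) to carry most of the conceptual content of the proof.
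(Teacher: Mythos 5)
Your outline coincides with the paper's own proof: the componentwise cover lemma you describe is the paper's Lemma~\ref{lem:real_3_cover}, the tight instances for cases~(II)--(VII) together with the $P_{i}(\lceil l b_{i} \rceil, l)$-style verification are the paper's Lemma~\ref{lem:miyagi_3_2-7} (where the strict thresholds such as $a_{2} > 5$ are handled by the $\varepsilon$-family $(\frac{3}{2}, 5 + \varepsilon, 9)$, etc., since no minimal instance exists there), the Beatty-partition treatment of case~(I) is Lemma~\ref{lem:miyagi_1}, and your final covering step is Lemma~\ref{lem:real_3_inclusion}, which the paper carries out in frequency coordinates $(1/a_{1}, 1/a_{2})$; note that in the range $2 < a_{1} < \frac{12}{5}$ the covering also needs a split on $a_{2}$ to choose between cases~(V) and~(VI), so it is not a split on $a_{1}$ alone.

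Two steps of your sketch would not stand as written. First, in case~(I) you justify the validity of tasks~2 and~3 by observing that consecutive spacings of $\lfloor (j + d) a_{2}' \rfloor$ lie in $\{\lfloor a_{2}' \rfloor, \lceil a_{2}' \rceil\}$; this alone does not yield real-valued validity, which demands at least $l$ occurrences in every $\lceil l a_{2}' \rceil$ days for \emph{every} $l$. For instance, a sequence whose spacings are all equal to $3 = \lceil \frac{5}{2} \rceil$ has spacings in $\{\lfloor \frac{5}{2} \rfloor, \lceil \frac{5}{2} \rceil\}$ yet fails the $l = 2$ requirement for period $\frac{5}{2}$. What is actually needed (and what the paper proves) is the floor-difference estimate $\lfloor (m + d + l) a_{2}' \rfloor - \lfloor (m + d) a_{2}' \rfloor - 1 \leq \lceil l a_{2}' \rceil - 1$ for all $l$, together with the analogous ceiling estimate for task~1. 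Second, your parenthetical claim that $a_{3} \geq a_{2}'$ is equivalent to $\frac{1}{a_{1}} + \frac{2}{a_{2}} \leq 1$ is incorrect: that hypothesis is what gives $a_{2} \geq a_{2}'$, whereas $a_{3} = \frac{1}{\frac{5}{6} - \frac{1}{a_{1}} - \frac{1}{a_{2}}} \geq a_{2}'$ is equivalent to $\frac{1}{a_{1}} + \frac{2}{a_{2}} \geq \frac{2}{3}$, which follows from the lemma's standing assumption $\frac{1}{a_{1}} + \frac{2}{a_{2}} > \frac{5}{6}$, not from the case-(I) condition. Both points are repairable without changing your route, which is otherwise the paper's.
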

Lemma~\ref{lem:real_3_exact} splits the whole set of real-valued
instances with three periods and a density of exactly $\frac{5}{6}$ into
seven cases~(I) through~(VII).  In particular, for the analysis of
cases~(II) through~(VII), we discovered six real-valued instances of
Lemma~\ref{lem:miyagi_3_2-7} for which the covering of
Lemma~\ref{lem:real_3_inclusion} succeeds by trial and error while
checking schedulability by an integer programming solver.

We discuss some lemmas for the proof of
Lemma~\ref{lem:real_3_exact}.
We begin with instances that satisfy
the condition of case~(I) of Lemma~\ref{lem:real_3_exact}, that is
$\frac{1}{a_{1}} + \frac{2}{a_{2}} \leq 1$.  We here insist on a valid
schedule in an explicit form of $S(t) = \ldots$.  Otherwise, the
schedulability itself can easily be shown by
Lemmas~\ref{lem:monotonicity} and \ref{lem:partitioning}, and the
schedulability of the instance $(a_{1}, \frac{a_{2}}{2})$ implied by
Theorem~\ref{thm:real_2_density_1}.
\begin{lemma}
  The schedule $S$ defined by~\eqref{eq:miyagi_3_1} is valid for
  the
  real-valued instance $(a_{1}, a_{2}, \frac{1}{\frac{5}{6} -
    \frac{1}{a_{1}} - \frac{1}{a_{2}}})$ for any $a_{1}$ and $a_{2}$
  such that $a_{1} < a_{2}$, $\frac{5}{6} < \frac{1}{a_{1}} +
  \frac{2}{a_{2}} \leq 1$, and $\frac{1}{a_{1}} + \frac{1}{a_{2}} <
  \frac{5}{6}$.
  \label{lem:miyagi_1}
\end{lemma}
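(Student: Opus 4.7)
The plan is to verify that formula~\eqref{eq:miyagi_3_1} defines a valid schedule in two stages: first that it defines a genuine function $S\colon\mathbb{Z}\to\{1,2,3\}$, and then that each of the three validity conditions holds. The design principle behind the formula is the substitution $b:=a_2'/2 = 1/(1-1/a_1)$, so that tasks~2 and~3 together occupy the Beatty-like set $T_2\cup T_3 = \{\lfloor kb\rfloor : k\in\mathbb{Z}\}$ (with $k$ even for task~2 and $k$ odd for task~3), while task~1 occupies $T_1 = \{\lceil ja_1\rceil - 1 : j\in\mathbb{Z}\}$; the identity $1/a_1 + 1/b = 1$ forces the densities to sum to exactly one.

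For well-definedness, I would prove that $T_1$ and $T_2\cup T_3$ partition $\mathbb{Z}$ by a short fractional-parts argument. Writing $\alpha = 1/a_1$ and $\beta = 1/b$, membership of $n$ in $T_1$ amounts to the existence of an integer in $(n\alpha, (n+1)\alpha]$, and membership in $T_2\cup T_3$ to the existence of an integer in $[n\beta, (n+1)\beta)$. Since $\alpha+\beta = 1$, the sum of fractional parts $\{n\alpha\}+\{n\beta\}$ belongs to $\{0,1\}$; a short case analysis on this sum shows that exactly one of the two intervals contains an integer. The specific choice of $\lceil\cdot\rceil - 1$ rather than $\lfloor\cdot\rfloor$ in the definition of $T_1$ is precisely what rules out overlaps in the boundary case $n\alpha\in\mathbb{Z}$. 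Disjointness of $T_2$ and $T_3$ is immediate since $b>1$ makes $k\mapsto\lfloor kb\rfloor$ injective.

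Each of the three validity conditions then reduces to the elementary observation that a half-open real interval of length at least $l$ contains at least $l$ integers. For task~1, the $j$'s with $\lceil ja_1\rceil - 1 \in [m, m+\lceil la_1\rceil)$ correspond bijectively to integers in $(m/a_1, (m+\lceil la_1\rceil)/a_1]$, an interval of length at least $l$, yielding at least $l$ task-1 days. The analogous count gives the condition for task~2 against the reference period $a_2'$; the hypothesis $1/a_1 + 2/a_2 \leq 1$ gives $a_2\geq a_2'$, and Lemma~\ref{lem:monotonicity} upgrades this to the required condition for period $a_2$. Task~3 is handled identically after the half-integer shift, provided $a_3\geq a_2'$; this inequality simplifies to $1/(2a_1)+1/a_2 \geq 1/3$ and follows from the hypothesis $1/a_1 + 2/a_2 > 5/6$ by halving both sides to obtain $1/(2a_1)+1/a_2 > 5/12 > 1/3$.

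The main obstacle I anticipate is the well-definedness step: the fractional-parts case analysis is the one place where the specific form of~\eqref{eq:miyagi_3_1} really matters and where careless conventions (such as using $\lfloor ja_1\rfloor$ in place of $\lceil ja_1\rceil - 1$) would cause overlaps at the exceptional values of $n$ with $n\alpha\in\mathbb{Z}$. Once the partition is secured, the validity verifications are routine Beatty-style counting combined with the two elementary inequalities $a_2 \geq a_2'$ and $a_3 \geq a_2'$, both of which drop out of the stated hypotheses on $a_1$ and $a_2$.
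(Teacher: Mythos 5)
Your proposal is correct and follows essentially the same route as the paper's proof: verify validity of task~1 against period $a_1$ and of tasks~2 and~3 against the reference period $a_2'$ via Beatty-style counting, then lift to $(a_1,a_2,a_3)$ using $a_2\geq a_2'$, $a_3\geq a_2'$ and Lemma~\ref{lem:monotonicity}. The only differences are cosmetic: the paper cites the known Beatty/Fraenkel partition theorem for the well-definedness step where you give a self-contained fractional-parts argument, and it bounds the longest span containing exactly $l-1$ occurrences where you count lattice points in an interval of length at least $l$; both are sound.
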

\begin{proof}
  First, we confirm that $S$ performs a single task for every day.
  Since $a_{2}' = \frac{2}{1 - \frac{1}{a_{1}}} > 2$, the set of days
  when task~2 is performed $\{\lfloor k a_{2}' \rfloor \;|\; k \in
  \mathbb{Z}\}$ and the set of days when task~3 is performed
  $\{\lfloor (k + \frac{1}{2}) a_{2}' \rfloor \;|\; k \in
  \mathbb{Z}\}$ have no intersection. Their union is $\{\lfloor k
  \cdot \frac{a_{2}'}{2} \rfloor \;|\; k \in \mathbb{Z}\}$.
  (See that $S$ for tasks~2 and 3 is constructed by treating them as a
  single task with period~$\frac{a_{2}'}{2}$ and then
  assigning tasks~2 and 3 in turn.)
  Moreover, for $a_{1}$ and $a_{2}'$ such that $\frac{1}{a_{1}} +
  \frac{2}{a_{2}'} = 1$, $\{\lfloor k \cdot \frac{a_{2}'}{2} \rfloor
  \;|\; k \in \mathbb{Z}\}$ and the set of days when task~1 is
  performed $\{\lceil j a_{1} \rceil - 1\;|\; j \in \mathbb{Z}\}$ are
  known to be a partition of
  $\mathbb{Z}$~(\cite{5638c72d-0b69-361b-b406-ab76ccc15749}).  (When
  $a_{1}$ and $a_{2}'$ are irrational numbers, this partition is
  called Rayleigh's or Beatty's theorem.)

  Next, we show that $S$ is valid for the instance $(a_{1}, a_{2}',
  a_{2}')$, which is sufficient to prove the lemma due to $a_{2} \geq
  a_{2}'$, $\frac{1}{\frac{5}{6} - \frac{1}{a_{1}} - \frac{1}{a_{2}}}
  \geq a_{2}'$, and Lemma~\ref{lem:monotonicity}.

  According to~\eqref{eq:miyagi_3_1}, for any $l \in \mathbb{N}$,
  the maximum time span in which task~1 is performed exactly $(l - 1)$
  times is from day $\lceil m a_{1} \rceil$ to day $(\lceil (m + l)
  a_{1} \rceil - 2)$ for some integer $m$, and its length is $(\lceil
  (m + l) a_{1} \rceil - \lceil m a_{1} \rceil - 1)$ days.  (i) If $l
  a_{1}$ is an integer, we evaluate $\lceil (m + l) a_{1} \rceil -
  \lceil m a_{1} \rceil - 1 = m a_{1} + \lceil l a_{1} \rceil - m
  a_{1} - 1 = l a_{1} - 1 = \lceil l a_{1} \rceil - 1$.  (ii)
  Otherwise, we get $\lceil (m + l) a_{1} \rceil - \lceil m a_{1}
  \rceil - 1 \leq (\lceil m a_{1} \rceil + \lceil l a_{1} \rceil) -
  \lceil m a_{1} \rceil - 1 = \lceil l a_{1} \rceil - 1$.  In any
  case, we observe that task~1 is done at least $l$ times in any
  $\lceil l a_{1} \rceil$ consecutive days.

  We finally check the validity of tasks~2 and 3 together.  Set $d =
  0$ for task~2 and $\frac{1}{2}$ for task~3.
  For any $l \in \mathbb{N}$,
  the maximum time span in which the task is performed
  exactly $(l - 1)$ times is from day $(\lfloor (m + d) a_{2}' \rfloor
  + 1)$ to day $(\lfloor (m + d + l) a_{2}' \rfloor - 1)$ for some
  integer $m$, and its length is $(\lfloor (m + d + l) a_{2}' \rfloor
  - \lfloor (m + d) a_{2}' \rfloor - 1)$ days.  (i) If $l a_{2}'$ is
  an integer, we obtain $\lfloor (m + d + l) a_{2}' \rfloor - \lfloor
  (m + d) a_{2}' \rfloor - 1 = \lfloor (m + d) a_{2}' \rfloor + l
  a_{2}' - \lfloor (m + d) a_{2}' \rfloor - 1 = l a_{2}' - 1 = \lceil
  l a_{2}' \rceil - 1$.  (ii) Otherwise, we get $\lfloor (m + d + l)
  a_{2}' \rfloor - \lfloor (m + d) a_{2}' \rfloor - 1 \leq (\lfloor (m
  + d) a_{2}' \rfloor + \lfloor l a_{2}' \rfloor + 1) - \lfloor (m +
  d) a_{2}' \rfloor - 1 = \lfloor l a_{2}' \rfloor = \lceil l a_{2}'
  \rceil - 1$.  In any case, we see that the task is done at least
  $l$ times in any $\lceil l a_{2}' \rceil$ consecutive days.
\end{proof}

The following lemma ensures the schedulability of the six real-valued
instances.  Each of the periods, except those involving $\varepsilon$,
is minimum possible.  Substituting zero for $\varepsilon$ makes the
instance non-schedulable.
\begin{lemma}
  The schedules defined by~\eqref{eq:miyagi_3_2}, \eqref{eq:miyagi_3_3},
  \eqref{eq:miyagi_3_4}, \eqref{eq:miyagi_3_5}, and \eqref{eq:miyagi_3_6}
  are valid for
  the
  real-valued instances
  $(\frac{3}{2}, 5 + \varepsilon, 9)$,
  $(\frac{11}{7}, 4 + \varepsilon, 11)$,
  $(\frac{12}{7}, 3 + \varepsilon, 12)$,
  $(2 + \varepsilon, 3, 6)$, and
  $(2 + \varepsilon, \frac{12}{5}, 12)$
  for any $\varepsilon > 0$, respectively.
  The schedule defined by~\eqref{eq:miyagi_3_7} is valid for
  $(\frac{12}{5}, \frac{12}{5}, 6)$.
  \label{lem:miyagi_3_2-7}
\end{lemma}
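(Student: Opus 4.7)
The plan is to verify, for each of the six periodic schedules and each of its three tasks, the real-valued validity condition $P_i(\lceil l a_i \rceil, l)$ for all $l \in \mathbb{N}$ by reducing it to finitely many gap inequalities. The argument closely parallels the proof of Lemma~\ref{lem:miyagi_2}, where the infinite family of conditions for a single task was collapsed into a check on small $l$ plus a periodicity extension.

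Fix a schedule $S$ of period $p$ and a task $i$ occurring at the global positions $q_1 < q_2 < \cdots$, with $q_{k+n} = q_k + p$ and $n$ occurrences per period. Set $M_l = \max_k (q_{k+l} - q_k)$. Any window $[m, m+y)$ containing fewer than $l$ occurrences of task $i$ sits inside some open interval $(q_k, q_{k+l})$, so the longest such window has length $M_l - 1$; therefore $P_i(y, l)$ is equivalent to $y \geq M_l$, and the required condition is $\lceil l a_i \rceil \geq M_l$ for all $l \geq 1$. Writing $l = jn + r$ with $j \geq 0$ and $1 \leq r \leq n$, one has $M_l = jp + M_r$, and whenever $a_i \geq p/n$ the quantity $\lceil l a_i \rceil - jp = \lceil r a_i + j(n a_i - p) \rceil$ is non-decreasing in $j$. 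Hence the infinite family collapses to the $n$ conditions $\lceil r a_i \rceil \geq M_r$ for $r = 1, \ldots, n$. By Lemma~\ref{lem:monotonicity} I may further replace each rational-period task by the tight case $a_i = p/n$, and each $\varepsilon$-perturbed task by $a_i = a + \varepsilon$ with $\varepsilon > 0$ as small as desired.

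With this framework, each of the eighteen (schedule, task) pairs reduces to a short tabulation: list the positions $q_1, \ldots, q_n$ within one period, read off the cyclic gaps, compute $M_1, \ldots, M_n$ as maxima of consecutive gap-sums, and compare to $\lceil a_i \rceil, \lceil 2 a_i \rceil, \ldots, \lceil n a_i \rceil$. For the rational periods $\tfrac{3}{2}, \tfrac{11}{7}, \tfrac{12}{7}, \tfrac{12}{5}$ and the integer periods $3, 6, 9, 11, 12$, the comparison is immediate. For the $\varepsilon$-perturbed periods $2 + \varepsilon, 3 + \varepsilon, 4 + \varepsilon, 5 + \varepsilon$, the essential observation is that for sufficiently small $\varepsilon > 0$ the ceiling $\lceil a + \varepsilon \rceil$ equals $a + 1$, providing exactly the extra unit of slack that the tight case $M_1 = a + 1$ demands; all larger $l$ inherit the inequality from $a_i > a \geq p/n$. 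I would demonstrate one case, say~(II), in full in the style of Lemma~\ref{lem:miyagi_2}, and present the remaining five compactly as tables of $(q_1, \ldots, q_n)$, $(M_1, \ldots, M_n)$, and $(\lceil a_i \rceil, \ldots, \lceil n a_i \rceil)$.

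The main obstacle is bulk rather than depth: eighteen essentially routine comparisons, each involving $n \leq 7$ integer inequalities. The one subtlety requiring attention is the $\varepsilon$-perturbation: the strict gain $\lceil a + \varepsilon \rceil = a + 1$ should be invoked only at those $l$ where the unperturbed period $a$ would saturate $\lceil l a \rceil = M_l - 1$, while the remaining $l$ follow from the non-strict $a_i \geq p/n$ together with monotonicity. Book-keeping these critical $l$'s cleanly is what keeps the proof from devolving into pure case analysis, and it is also where the necessity of $\varepsilon > 0$ in the statement of the lemma is laid bare.
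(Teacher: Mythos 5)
Your proposal is correct and follows essentially the same route as the paper: exploit the periodicity of each schedule to collapse the infinite family of conditions $P_{i}(\lceil l a_{i} \rceil, l)$ into finitely many checks at small $l$, verify those directly, and handle the $\varepsilon$-periods via $\lceil l(a+\varepsilon)\rceil \geq la+1$ (or, as you do, via small $\varepsilon$ plus Lemma~\ref{lem:monotonicity}); the paper does this check-by-check using the additivity of the predicates $P_{i}$ together with a separate bound for large $l$, while you package the identical idea as the uniform criterion $\lceil r a_{i} \rceil \geq M_{r}$ for $r = 1, \ldots, n$ under $a_{i} \geq p/n$, which is an equivalent and slightly tidier bookkeeping. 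Your max-gap reduction is sound, the hypothesis $a_{i} \geq p/n$ indeed holds for all eighteen task--schedule pairs, and the resulting finite tabulations go through, so the plan constitutes a complete proof at the same level of detail as the paper's.
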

\begin{proof}
  It is easy to check for the tasks with integer periods.  For the
  tasks with rational periods, we can verify the validity as the same
  as in the proof of Lemma~\ref{lem:miyagi_2}.  We therefore focus on
  the tasks with periods that involve $\varepsilon$.  We will
  demonstrate that for any $\varepsilon > 0$, $S$
  in~\eqref{eq:miyagi_3_2} satisfies the condition for task~2 with
  period~$5 + \varepsilon$ in the instance $(\frac{3}{2}, 5 +
  \varepsilon, 9)$.

  Fix $\varepsilon > 0$ arbitrarily.  As in the proof of
  Lemma~\ref{lem:miyagi_2}, we denote by $P_{i}(y, l)$ the proposition
  that for each $m \in \mathbb{Z}$, there exist at least $l$ values of
  $t \in [m, m + y) \cap \mathbb{Z}$ such that
    $S(t) = i$, which means that the schedule $S$ performs task $i$ at
    least once every $y$ consecutive days.
    Clearly, for $y \leq y'$ and $l \geq l'$, $P_{i}(y, l)
    \Rightarrow P_{i}(y', l')$ holds true.  In addition, if $P_{i}(y,
    l)$ and $P_{i}(y', l')$ hold true, then so does $P_{i}(y + y', l +
    l')$.

  The schedulability of task~2 in the instance $(\frac{3}{2}, 5 +
  \varepsilon, 9)$ is equivalent to $P_{2}(\lceil l \cdot (5 +
  \varepsilon) \rceil, l)$ for all $l \in \mathbb{N}$. We are going to
  prove this.  We know from \eqref{eq:miyagi_3_2} that $S$ is itself
  has a period of 9 and $S(t) = 2$ for $t\equiv 2, 5$ $(\text{mod }
  9)$, which implies that $P_{2}(9, 2)$ is true.  See that $P_{2}(9,
  2)$ indicates that task~2 is performed approximately on average once
  every $\frac{9}{2} (< 5 + \varepsilon)$ days.  Roughly speaking,
  $P_{2}(\lceil l \cdot (5 + \varepsilon) \rceil, l)$ holds true for
  large $l$ and therefore we only need to check this for small $l$.
  We will often use the following inequality, $\lceil l \cdot (5 +
  \varepsilon) \rceil = 5l + \lceil l \varepsilon \rceil \geq 5l + 1$
  for any integer $l$.

  For $l = 1$, \eqref{eq:miyagi_3_2} says that there is at least one
  day $t$ satisfying $t \equiv 2, 5$ $(\text{mod } 9)$ in any $5l + 1
  = 6$ consecutive days, thus $P_{2}(6, 1) \Rightarrow P_{2}(\lceil 1
  \cdot (5 + \varepsilon) \rceil, 1)$.  For $l = 2$, since $\lceil 2
  \cdot (5 + \varepsilon) \rceil \geq 5 \cdot 2 + 1 = 11 > 1 \cdot 9$,
  $P_{2}(9, 2) \Rightarrow P_{2}(\lceil 2 \cdot (5 + \varepsilon)
  \rceil, 2)$ holds.  For $l = 3$, it follows that $\lceil 3 \cdot (5
  + \varepsilon) \rceil \geq 5 \cdot 3 + 1 = 16 = 1 \cdot 9 + 7$.
  Since $P_{2}(6, 1) \Rightarrow P_{2}(7, 1)$ and $P_{2}(9, 2)$, we
  have $P_{2}(9 + 7, 2 + 1) \Rightarrow P_{2}(\lceil 3 \cdot (5 +
  \varepsilon) \rceil, 3)$.  For $l = 4$, $\lceil 4 \cdot (5 +
  \varepsilon) \rceil \geq 5 \cdot 4 + 1 = 21 > 2 \cdot 9$ holds.
  Using $P_{2}(9, 2)$, we derive $P_{2}(2 \cdot 9, 2 \cdot 2)
  \Rightarrow P_{2}(\lceil 4 \cdot (5 + \varepsilon) \rceil, 4)$.  For
  $l = 5$, we have $\lceil 5 \cdot (5 + \varepsilon) \rceil \geq 5
  \cdot 5 + 1 = 26 = 2 \cdot 9 + 8$.  From $P_{2}(6, 1) \Rightarrow
  P_{2}(8, 1)$ and $P_{2}(9, 2)$, we get $P_{2}(2 \cdot 9 + 8, 2 \cdot
  2 + 1) \Rightarrow P_{2}(\lceil 5 \cdot (5 + \varepsilon) \rceil,
  5)$.  For $l = 6$, $\lceil 6 \cdot (5 + \varepsilon) \rceil \geq 5
  \cdot 6 + 1 = 31 > 3 \cdot 9$ holds.  We have $P_{2}(3 \cdot 9, 3
  \cdot 2) \Rightarrow P_{2}(\lceil 6 \cdot (5 + \varepsilon) \rceil,
  6)$ by $P_{2}(9, 2)$.  Finally, for $l \geq 7$, we have $\lceil l
  \cdot (5 + \varepsilon) \rceil \geq 5 l + 1 \geq 9 \cdot \lfloor
  \frac{5l + 1}{9} \rfloor$.  It also holds true that $2 \cdot \lfloor
  \frac{5l + 1}{9} \rfloor \geq l$, since $2 \cdot \lfloor \frac{5l +
    1}{9} \rfloor - l > 2 \cdot (\frac{5l + 1}{9} - 1) - l = \frac{l -
    7}{9} \geq 0$.  Applying $P_{2}(9, 2)$, we conclude $P_{2}(9 \cdot
  \lfloor \frac{5l + 1}{9} \rfloor, 2 \cdot \lfloor \frac{5l + 1}{9}
  \rfloor) \Rightarrow P_{2}(\lceil l \cdot (5 + \varepsilon) \rceil,
  l)$.
\end{proof}

Lemma~\ref{lem:real_3_cover} below is a helper lemma that derives a
set of schedulable real-valued instances from those found to be
schedulable by Lemma~\ref{lem:miyagi_3_2-7}, which is the counterpart
of Lemma~\ref{lem:real_2_cover} in Section~\ref{sec:2_periods_proof}.
Note that the resulting set includes instances whose periods are not
in non-decreasing order.  Lemma~\ref{lem:real_3_cover} follows
immediately from Lemma~\ref{lem:monotonicity}.
\begin{lemma}
  If a schedule $S$ is valid for the real-valued instance $(b_{1}, b_{2},
  b_{3})$, then $S$ is valid also for the real-valued instance $(a_{1},
  a_{2}, \frac{1}{\frac{5}{6} - \frac{1}{a_{1}} - \frac{1}{a_{2}}})$
  with $a_{1} \geq b_{1}$, $a_{2} \geq b_{2}$, and
  $\frac{1}{\frac{5}{6} - \frac{1}{a_{1}} - \frac{1}{a_{2}}} \geq
  b_{3}$.
  \label{lem:real_3_cover}
\end{lemma}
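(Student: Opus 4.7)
The plan is to derive Lemma~\ref{lem:real_3_cover} as a direct iterated application of Lemma~\ref{lem:monotonicity}, which allows any single period in an instance to be enlarged without destroying the validity of a given schedule. Since the target instance $(a_{1}, a_{2}, \tfrac{1}{\frac{5}{6} - \frac{1}{a_{1}} - \frac{1}{a_{2}}})$ and the source instance $(b_{1}, b_{2}, b_{3})$ both have three periods and the hypothesis supplies coordinate-wise inequalities, this should reduce to three successive one-period replacements.

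Concretely, starting from the assumption that $S$ is valid for $(b_{1}, b_{2}, b_{3})$, I would first apply Lemma~\ref{lem:monotonicity} to replace $b_{1}$ by $a_{1}$; this step is legal because $a_{1} \geq b_{1}$, and it yields validity of $S$ for $(a_{1}, b_{2}, b_{3})$. A second application, replacing $b_{2}$ by $a_{2} \geq b_{2}$, gives validity for $(a_{1}, a_{2}, b_{3})$. Finally, letting $c := \tfrac{1}{\frac{5}{6} - \frac{1}{a_{1}} - \frac{1}{a_{2}}}$ and using the hypothesis $c \geq b_{3}$, a third application of Lemma~\ref{lem:monotonicity} produces validity of $S$ for $(a_{1}, a_{2}, c)$, which is exactly the instance appearing in the conclusion. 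Note that the hypothesis $c \geq b_{3} \geq 1$ implicitly forces $\frac{1}{a_{1}} + \frac{1}{a_{2}} < \frac{5}{6}$, so $c$ is a well-defined positive real, and Lemma~\ref{lem:monotonicity} does indeed apply to it as a period.

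There is essentially no obstacle: the argument is pure bookkeeping on top of Lemma~\ref{lem:monotonicity}. The only point worth checking is that the three period replacements can be carried out independently and in any order, which is immediate from the fact that Lemma~\ref{lem:monotonicity} is stated for a single arbitrary coordinate and leaves the other periods untouched, so the schedule $S$ remains the \emph{same} schedule throughout all three substitutions.
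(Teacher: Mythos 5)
Your proposal is correct and matches the paper's reasoning: the paper simply notes that Lemma~\ref{lem:real_3_cover} follows immediately from Lemma~\ref{lem:monotonicity}, and your three successive coordinate-wise period enlargements spell out exactly that argument.
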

The conditions of cases~(II) through (VII) of
Lemma~\ref{lem:real_3_exact} turn out to be the conditions that are
fulfilled by schedulable real-valued instances obtained by applying
Lemma~\ref{lem:real_3_cover} to the instances found to be schedulable
by Lemma~\ref{lem:miyagi_3_2-7}.

Lemma~\ref{lem:real_3_exact} holds if the union of such real-valued
instances and real-valued instances that apply to case~(I), which are
shown to be schedulable by Lemma~\ref{lem:miyagi_1},
includes all real-valued instances that have three periods and a density of exactly
$\frac{5}{6}$.
To facilitate this discussion,
we map a projection of a real-valued instance $(a_{1}, a_{2}) \in
\{(x, y) \;|\; 1 \leq x, 1 \leq y\}$ by a bijection $(a_{1}, a_{2})
\mapsto (\frac{1}{a_{1}}, \frac{1}{a_{2}}) \in \{(X, Y) \;|\;
0 < X \leq 1, 0 < Y \leq 1\}$.
In other words, we look at \emph{frequencies} rather
than periods.

\begin{figure}
  \centering
  \begin{minipage}[b]{0.48\textwidth}\centering%
    \includegraphics{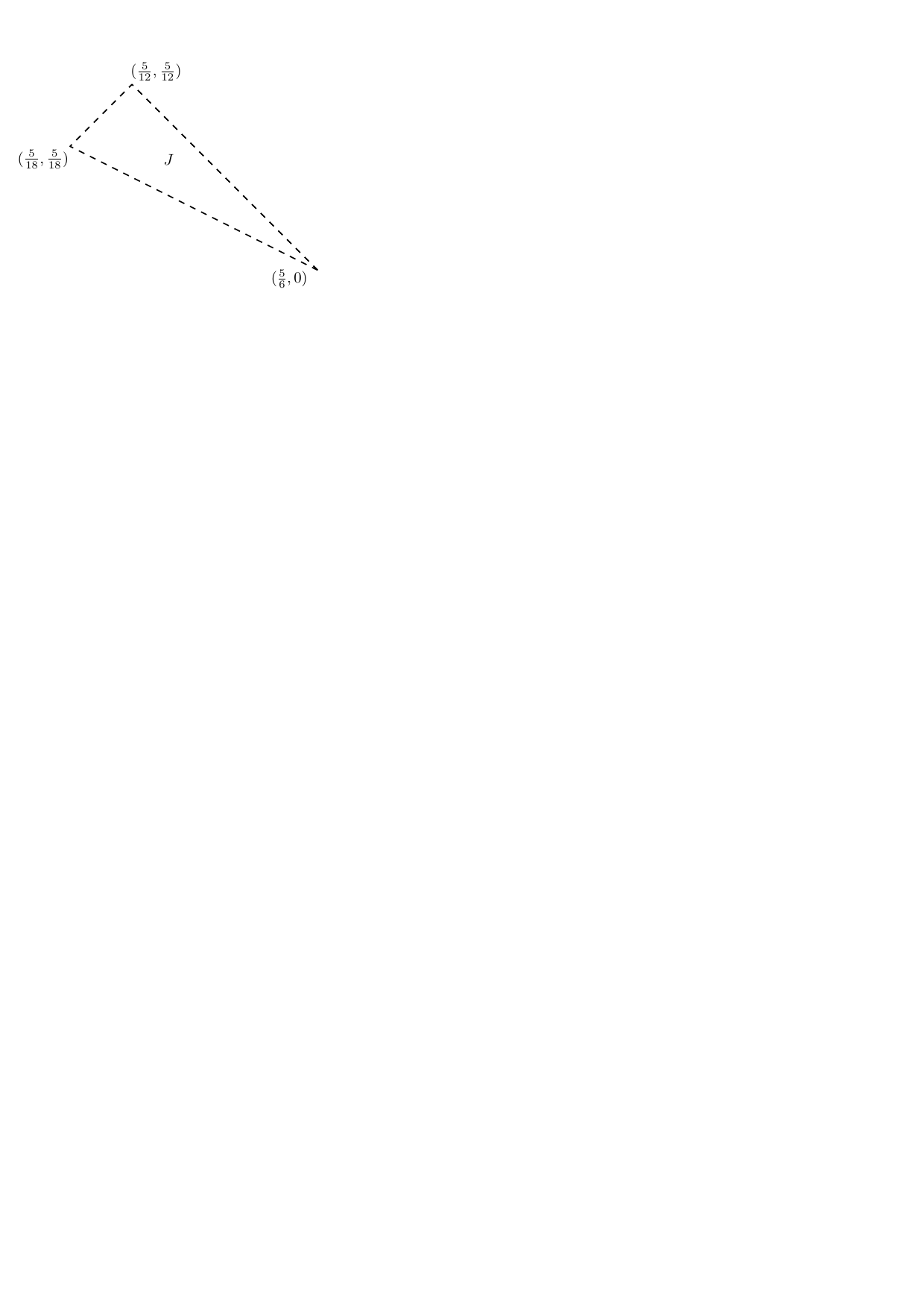}
  \end{minipage}\hfill
  \begin{minipage}[b]{0.48\textwidth}\centering%
    \includegraphics{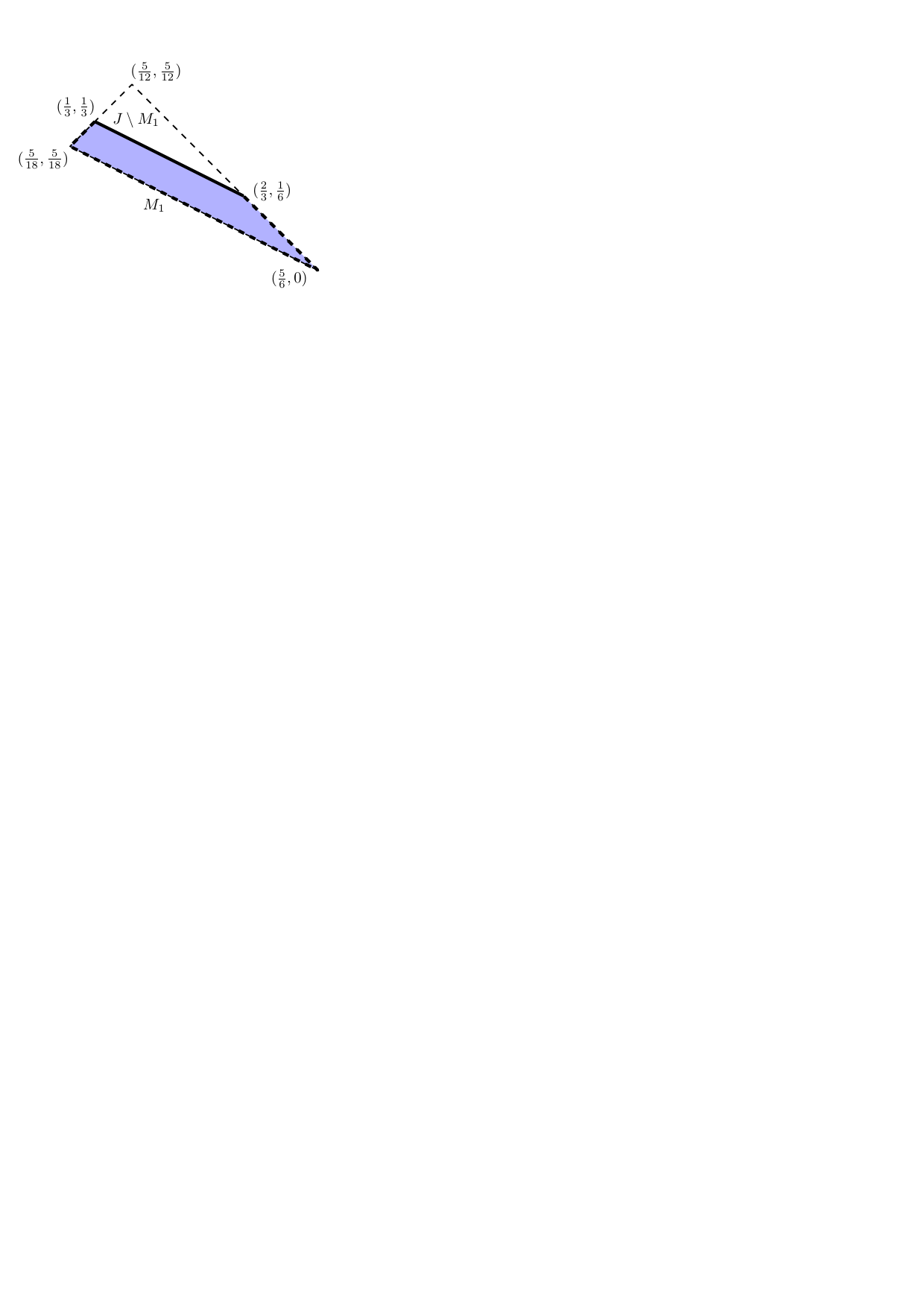}
  \end{minipage}
  \begin{minipage}[t]{0.48\textwidth}\centering%
    \caption{
      Illustration of the region $J$. The dashed lines are excluded.
      Each point in $J$ corresponds to a real-valued instance
      with three periods and a density of exactly $\frac{5}{6}$.}
    \label{fig:uragaeshi_j}
  \end{minipage}\hfill
  \begin{minipage}[t]{0.48\textwidth}\centering%
    \caption{
      Illustration of the region $M_{1}$. The dashed lines are excluded.}
    \label{fig:uragaeshi_1}
  \end{minipage}
\end{figure}
\begin{figure}
  \centering
  \begin{minipage}[b]{0.48\textwidth}\centering%
    \includegraphics{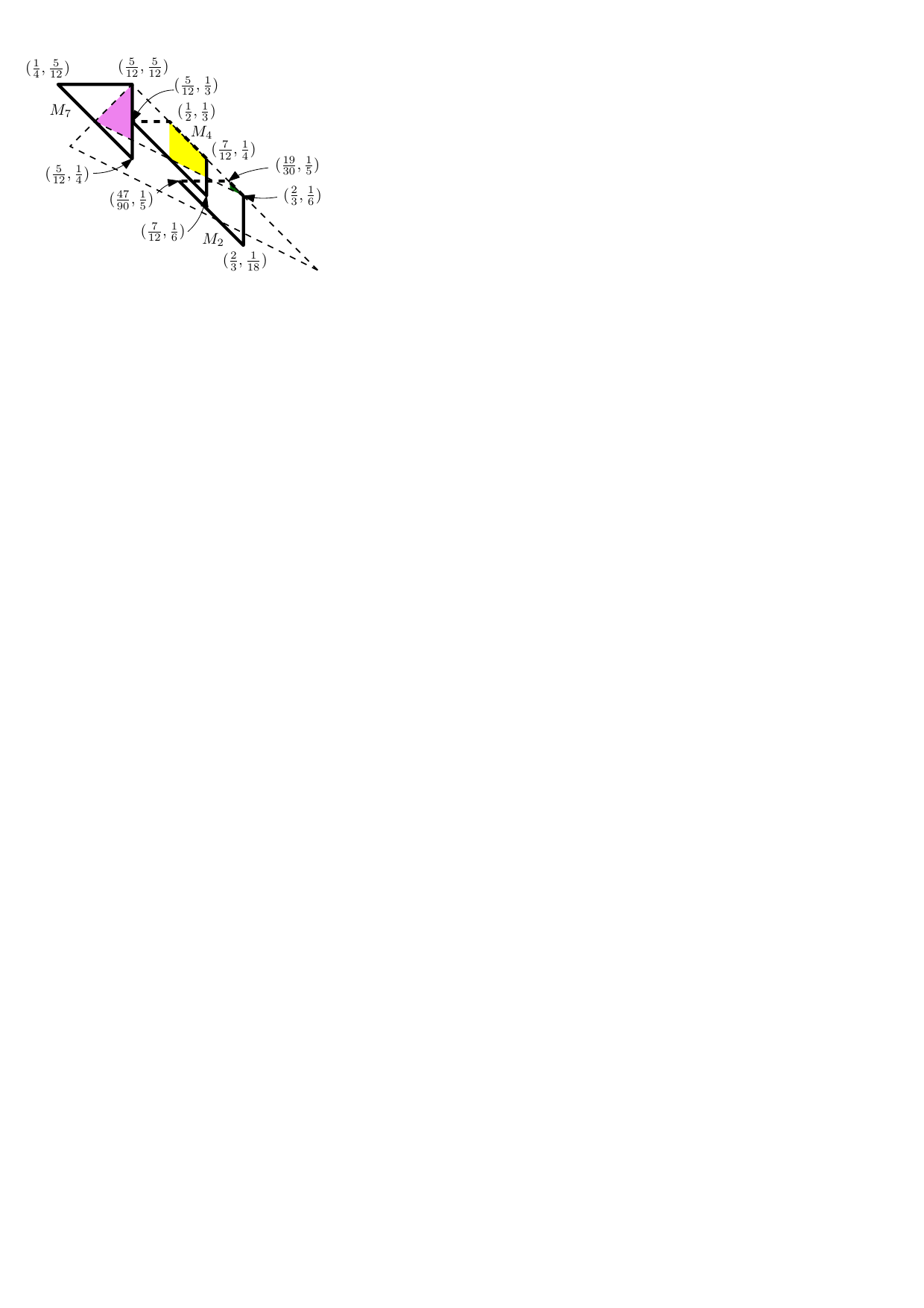}
  \end{minipage}\hfill
  \begin{minipage}[b]{0.48\textwidth}\centering%
    \includegraphics{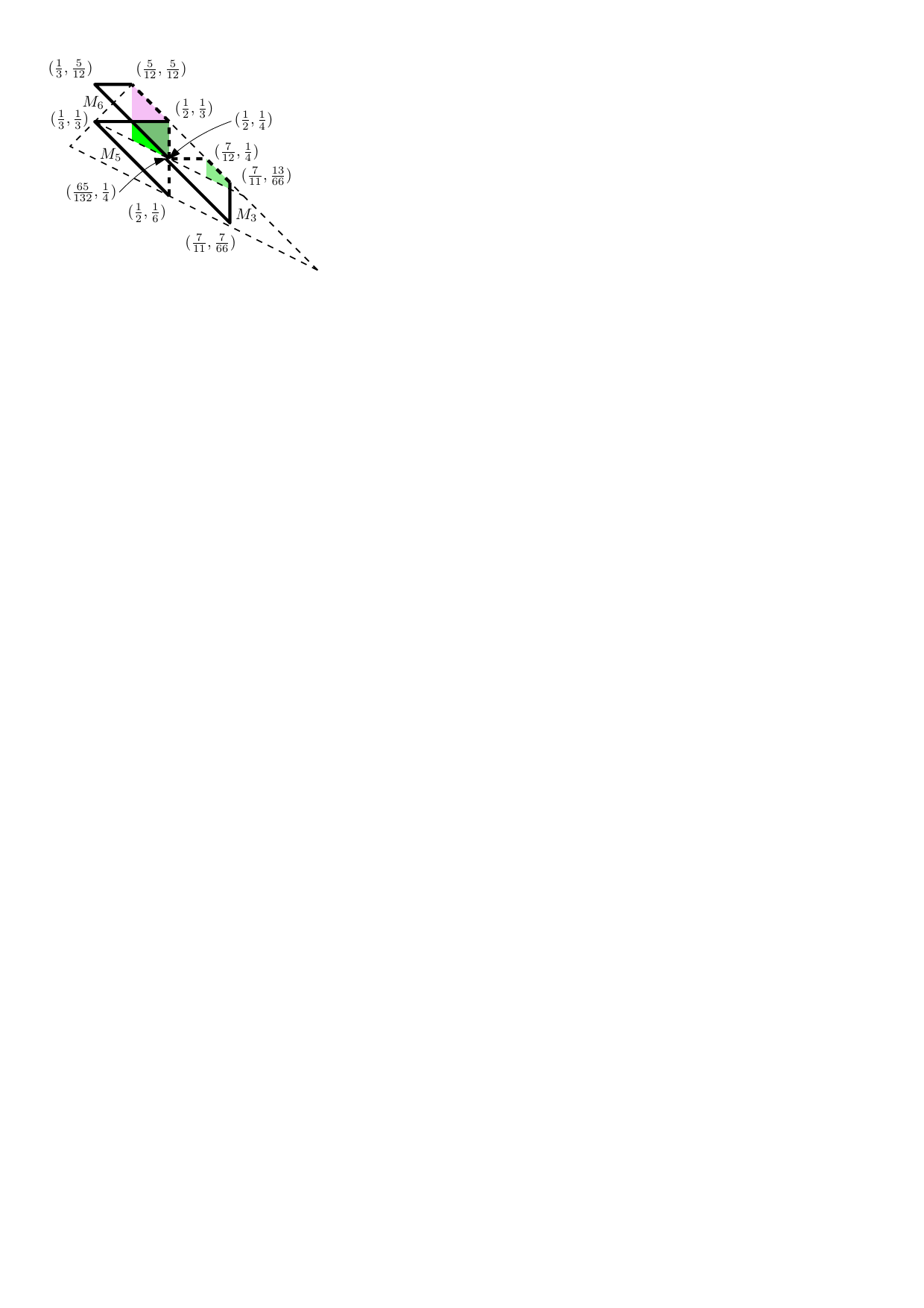}
  \end{minipage}
  \begin{minipage}[t]{0.48\textwidth}\centering%
    \caption{
      Illustration of the regions
      $M_{2}$, $M_{4}$, and $M_{7}$. The dashed lines are excluded.
      The colored regions indicate subregions of $J \setminus M_{1}$ which
      are each covered by $M_{2}$, $M_{4}$, and $M_{7}$.}
    \label{fig:uragaeshi_2_4_7}
  \end{minipage}\hfill
  \begin{minipage}[t]{0.48\textwidth}\centering%
    \caption{
      Illustration of the regions
      $M_{3}$, $M_{5}$, and $M_{6}$. The dashed lines are excluded.
      The colored regions indicate subregions of $J \setminus M_{1}$ which
      are each covered by $M_{3}$, $M_{5}$, and $M_{6}$.}
    \label{fig:uragaeshi_3_5_6}
  \end{minipage}
\end{figure}
The image of the whole set of real-valued instances
with three periods and a density of exactly $\frac{5}{6}$ is
\begin{equation}
  J := \Bigl\{(X, Y) \;\Big|\;
  X > Y, \frac{5}{6} < X + 2 Y, X + Y < \frac{5}{6}\Bigr\}.
\end{equation}
(Refer to the beginning of the statement of
Lemma~\ref{lem:real_3_exact} for the condition.)  The images of the
whole set of real-valued instances that apply to cases~(I) through
(VII) are, respectively,
\begin{alignat*}{2}
  M_{1} & := \Bigl\{(X, Y) \;\Big|\;
  X > Y, \frac{5}{6} < X + 2 Y \leq 1, X + Y < \frac{5}{6}\Bigr\},\\
  M_{2} & := \Bigl\{(X, Y) \;\Big|\;
  X \leq \frac{2}{3}, Y < \frac{1}{5},
  \frac{5}{6} - \frac{1}{9} \leq X + Y < \frac{5}{6}\Bigr\},\\
  M_{3} & := \Bigl\{(X, Y) \;\Big|\;
  X \leq \frac{7}{11}, Y < \frac{1}{4},
  \frac{5}{6} - \frac{1}{11} \leq X + Y < \frac{5}{6}\Bigr\},\\
  M_{4} & := \Bigl\{(X, Y) \;\Big|\;
  X \leq \frac{7}{12}, Y < \frac{1}{3},
  \frac{5}{6} - \frac{1}{12} \leq X + Y < \frac{5}{6}\Bigr\},\\
  M_{5} & := \Bigl\{(X, Y) \;\Big|\;
  X < \frac{1}{2}, Y \leq \frac{1}{3},
  \frac{5}{6} - \frac{1}{6} \leq X + Y < \frac{5}{6}\Bigr\},\\
  M_{6} & := \Bigl\{(X, Y) \;\Big|\;
  X < \frac{1}{2}, Y \leq \frac{5}{12},
  \frac{5}{6} - \frac{1}{12} \leq X + Y < \frac{5}{6}\Bigr\},
  \text{ and}\\
  M_{7} & := \Bigl\{(X, Y) \;\Big|\;
  X \leq \frac{5}{12}, Y \leq \frac{5}{12},
  \frac{5}{6} - \frac{1}{6} \leq X + Y < \frac{5}{6}\Bigr\}.
\end{alignat*}
See Figures~\ref{fig:uragaeshi_j}, 
\ref{fig:uragaeshi_1}, \ref{fig:uragaeshi_2_4_7}
and~\ref{fig:uragaeshi_3_5_6}.

Lemma~\ref{lem:real_3_inclusion} may be visually clear from the figures.
\begin{lemma}
  It holds that $J \subset \bigcup_{i = 1}^{7} M_{i}$.
  \label{lem:real_3_inclusion}
\end{lemma}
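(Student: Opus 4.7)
The plan is to observe that $M_{1} = J \cap \{X + 2Y \leq 1\}$, so it suffices to prove that $J' := J \cap \{X + 2Y > 1\}$ is contained in $M_{2} \cup \cdots \cup M_{7}$, which I would do by a case analysis on $Y$. Throughout $J'$, the defining inequalities $X > Y$, $X + 2Y > 1$, and $X + Y < 5/6$ already force $X < 2/3$ (combine $Y > (1-X)/2$ with $Y < 5/6 - X$), $Y < 5/12$ (combine $Y < X$ with $Y < 5/6 - X$), and crucially $X + Y > 1 - Y$. These will feed the lower bounds on $X + Y$ required by each $M_{i}$, while $X < 5/6 - Y$ will supply the needed upper bounds on $X$.

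I would partition $J'$ by intervals of $Y$. For $Y < 1/5$, the bound $X + Y > 4/5 \geq 5/6 - 1/9$ together with $X < 2/3$ gives $M_{2}$. For $1/5 \leq Y < 1/4$, $X + Y > 3/4 \geq 5/6 - 1/11$ and $X < 5/6 - 1/5 = 19/30 \leq 7/11$ gives $M_{3}$. For $1/4 \leq Y < 1/3$ I would subdivide: if $X + Y \geq 3/4$, then $X < 5/6 - 1/4 = 7/12$ yields $M_{4}$; otherwise $X = (X+Y) - Y < 3/4 - 1/4 = 1/2$ while $X + Y > 1 - Y \geq 2/3$ yields $M_{5}$. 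For $Y = 1/3$, $X < 5/6 - 1/3 = 1/2$ and $X + Y > 2/3$ directly give $M_{5}$. Finally, for $1/3 < Y < 5/12$ I would subdivide on $X$: if $X \leq 5/12$, then $X + Y > 2Y > 2/3$ gives $M_{7}$; if $X > 5/12$, then $X + Y > 5/12 + 1/3 = 3/4$ together with $X < 5/6 - 1/3 = 1/2$ gives $M_{6}$.

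The only real obstacle is bookkeeping on the strict versus non-strict boundaries in the definitions of the $M_{i}$ (for example $Y < 1/3$ in $M_{4}$ versus $Y \leq 1/3$ in $M_{5}$, and the various open $X$-bounds such as $X < 1/2$), which is what motivates treating $Y = 1/3$ as a separate case and splitting on $X$ at $5/12$ in the last interval. All the numerical comparisons invoked (such as $19/30 \leq 7/11$) are routine, and the partition itself is visually transparent from Figure~\ref{fig:uragaeshi_full}.
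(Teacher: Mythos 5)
Your proof is correct: every step checks out. Indeed $M_{1}=J\cap\{X+2Y\le 1\}$, so the task reduces to your $J'=J\cap\{X+2Y>1\}$; inside $J'$ one really does have $X<\frac{2}{3}$, $Y<\frac{5}{12}$ and $X+Y>1-Y$, so your five $Y$-cases exhaust $J'$; and the numerical facts you rely on ($\frac{4}{5}\ge\frac{5}{6}-\frac{1}{9}$, $\frac{3}{4}\ge\frac{5}{6}-\frac{1}{11}$, $\frac{19}{30}\le\frac{7}{11}$, $\frac{5}{6}-\frac{1}{12}=\frac{3}{4}$, $\frac{5}{6}-\frac{1}{6}=\frac{2}{3}$) are all true, and each case lands in the claimed $M_{i}$ with the strict/non-strict boundaries respected. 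The paper takes the same first step (it passes to $J\setminus M_{1}$, which is exactly your $J'$), but then decomposes along the other axis: it slices by $X$ at the thresholds $\frac{7}{11}$, $\frac{7}{12}$, $\frac{1}{2}$, $\frac{5}{12}$ coming from $M_{2}$--$M_{7}$, describes the intersection of $J\setminus M_{1}$ and of the relevant $M_{i}$ with each strip as an explicit triangle, trapezoid or parallelogram, and verifies containment by comparing the $Y$-coordinates of the vertices. Your version slices by $Y$ at $\frac{1}{5}$, $\frac{1}{4}$, $\frac{1}{3}$, with adaptive splits on $X+Y$ and on $X$ in the last two strips, and verifies containment by direct inequality chains; this avoids the vertex bookkeeping entirely and is arguably shorter and easier to audit line by line, at the cost of being less directly tied to the picture in Figure~\ref{fig:uragaeshi_full} that the paper's exposition leans on.
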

\begin{proof}
  First, we obtain $J \setminus M_{1} = \{(X, Y) \;|\; X > Y, 1 < X +
  2 Y, X + Y < \nicefrac{5}{6}\}$, which is the interior of the
  triangle $(\nicefrac{2}{3}, \nicefrac{1}{6})$, $(\nicefrac{5}{12},
  \nicefrac{5}{12})$, $(\nicefrac{1}{3}, \nicefrac{1}{3})$ (see
  Figures~\ref{fig:uragaeshi_j} and \ref{fig:uragaeshi_1}).  We will
  show that $J \setminus M_{1} \subset \bigcup_{i = 2}^{7} M_{i}$ by
  dividing into regions $\nicefrac{7}{11} < X \leq 1$,
  $\nicefrac{7}{12} < X \leq \nicefrac{7}{11}$, $\nicefrac{1}{2} \leq
  X \leq \nicefrac{7}{12}$, $\nicefrac{5}{12} < X < \nicefrac{1}{2}$,
  and $0 < X \leq \nicefrac{5}{12}$ (see
  Figures~\ref{fig:uragaeshi_2_4_7} and~\ref{fig:uragaeshi_3_5_6}).
  ~\\

  (i)~$(J \setminus M_{1}) \cap \{(X, Y) \;|\; \nicefrac{7}{11} < X \leq 1, 0
  < Y \leq 1\}$ is the interior of the triangle $(\nicefrac{2}{3},
  \nicefrac{1}{6})$, $(\nicefrac{7}{11}, \nicefrac{13}{66})$, $(\nicefrac{7}{11},
  \nicefrac{2}{11})$.  On the other hand, $M_{2} \cap \{(X, Y) \;|\;
  \nicefrac{7}{11} < X \leq 1, 0 < Y \leq 1\}$ is the sides and interior of the
  parallelogram $(\nicefrac{2}{3}, \nicefrac{1}{18})$, $(\nicefrac{2}{3},
  \nicefrac{1}{6})$, $(\nicefrac{7}{11}, \nicefrac{13}{66})$, $(\nicefrac{7}{11},
  \nicefrac{17}{198})$, excluding the side $(\nicefrac{7}{11},
  \nicefrac{13}{66})$, $(\nicefrac{7}{11}, \nicefrac{17}{198})$.  Noting that
  $\nicefrac{17}{198} < \nicefrac{2}{11} < \nicefrac{13}{66}$ for the $Y$
  coordinates of the vertices, we obtain $(J \setminus M_{1}) \cap
  \{(X, Y) \;|\; \nicefrac{7}{11} < X \leq 1, 0 < Y \leq 1\} \subset M_{2} \cap
  \{(X, Y) \;|\; \nicefrac{7}{11} < X \leq 1, 0 < Y \leq 1\}$
  (see Figure~\ref{fig:uragaeshi_2_4_7}).
  ~\\

  (ii)~$(J \setminus M_{1}) \cap \{(X, Y) \;|\; \nicefrac{7}{12} < X \leq
  \nicefrac{7}{11}, 0 < Y \leq 1\}$ is the sides and interior of the
  trapezoid $(\nicefrac{7}{11}, \nicefrac{2}{11})$, $(\nicefrac{7}{11},
  \nicefrac{13}{66})$, $(\nicefrac{7}{12}, \nicefrac{1}{4})$, $(\nicefrac{7}{12},
  \nicefrac{5}{24})$, excluding the three sides $(\nicefrac{7}{11},
  \nicefrac{13}{66})$, $(\nicefrac{7}{12}, \nicefrac{1}{4})$, $(\nicefrac{7}{12},
  \nicefrac{1}{4})$, $(\nicefrac{7}{12}, \nicefrac{5}{24})$, and $(\nicefrac{7}{12},
  \nicefrac{5}{24})$, $(\nicefrac{7}{11}, \nicefrac{2}{11})$.  $M_{3} \cap \{(X,
  Y) \;|\; \nicefrac{7}{12} < X \leq \nicefrac{7}{11}, 0 < Y \leq 1\}$ is the
  sides and interior of the parallelogram $(\nicefrac{7}{11},
  \nicefrac{7}{66})$, $(\nicefrac{7}{11}, \nicefrac{13}{66})$, $(\nicefrac{7}{12},
  \nicefrac{1}{4})$, $(\nicefrac{7}{12}, \nicefrac{7}{44})$, excluding the two
  sides $(\nicefrac{7}{11}, \nicefrac{13}{66})$, $(\nicefrac{7}{12}, \nicefrac{1}{4})$
  and $(\nicefrac{7}{12}, \nicefrac{1}{4})$, $(\nicefrac{7}{12}, \nicefrac{7}{44})$.
  Since $\nicefrac{7}{66} < \nicefrac{2}{11} < \nicefrac{13}{66}$ and
  $\nicefrac{7}{44} < \nicefrac{5}{24} < \nicefrac{1}{4}$, we have $(J \setminus
  M_{1}) \cap \{(X, Y) \;|\; \nicefrac{7}{12} < X \leq \nicefrac{7}{11}, 0 < Y
  < 1\} \subset M_{3} \cap \{(X, Y) \;|\; \nicefrac{7}{12} < X \leq
  \nicefrac{7}{11}, 0 < Y \leq 1\}$
  (see Figure~\ref{fig:uragaeshi_3_5_6}).
  ~\\

  (iii)~$(J \setminus M_{1}) \cap \{(X, Y) \;|\; \nicefrac{1}{2} \leq X
  \leq \nicefrac{7}{12}, 0 < Y \leq 1\}$ is the sides and interior of the
  trapezoid $(\nicefrac{7}{12}, \nicefrac{5}{24})$, $(\nicefrac{7}{12},
  \nicefrac{1}{4})$, $(\nicefrac{1}{2}, \nicefrac{1}{3})$, $(\nicefrac{1}{2},
  \nicefrac{1}{4})$, excluding the two sides $(\nicefrac{7}{12},
  \nicefrac{1}{4})$, $(\nicefrac{1}{2}, \nicefrac{1}{3})$ and $(\nicefrac{1}{2},
  \nicefrac{1}{4})$, $(\nicefrac{7}{12}, \nicefrac{5}{24})$.  $M_{4} \cap \{(X, Y)
  \;|\; \nicefrac{1}{2} \leq X \leq \nicefrac{7}{12}, 0 < Y \leq 1\}$ is the
  sides and interior of the parallelogram $(\nicefrac{7}{12},
  \nicefrac{1}{6})$, $(\nicefrac{7}{12}, \nicefrac{1}{4})$, $(\nicefrac{1}{2},
  \nicefrac{1}{3})$, $(\nicefrac{1}{2}, \nicefrac{1}{4})$, excluding the side
  $(\nicefrac{7}{12}, \nicefrac{1}{4})$, $(\nicefrac{1}{2}, \nicefrac{1}{3})$.  Due to
  $\nicefrac{1}{6} < \nicefrac{5}{24} < \nicefrac{1}{4}$, it holds that $(J
  \setminus M_{1}) \cap \{(X, Y) \;|\; \nicefrac{1}{2} \leq X \leq
  \nicefrac{7}{12}, 0 < Y \leq 1\} \subset M_{4} \cap \{(X, Y) \;|\;
  \nicefrac{1}{2} \leq X \leq \nicefrac{7}{12}, 0 < Y \leq 1\}$
  (see Figure~\ref{fig:uragaeshi_2_4_7}).
  ~\\

  (iv)~$(J \setminus M_{1}) \cap \{(X, Y) \;|\; \nicefrac{5}{12} < X <
  \nicefrac{1}{2}, 0 < Y \leq 1\}$ is the interior of the trapezoid
  $(\nicefrac{1}{2}, \nicefrac{1}{4})$, $(\nicefrac{1}{2}, \nicefrac{1}{3})$,
  $(\nicefrac{5}{12}, \nicefrac{5}{12})$, $(\nicefrac{5}{12}, \nicefrac{7}{24})$.
  $M_{5} \cap \{(X, Y) \;|\; \nicefrac{5}{12} < X < \nicefrac{1}{2}, 0 < Y <
  1\}$ is is the sides and interior of the trapezoid $(\nicefrac{1}{2},
  \nicefrac{1}{6})$, $(\nicefrac{1}{2}, \nicefrac{1}{3})$, $(\nicefrac{5}{12},
  \nicefrac{1}{3})$, $(\nicefrac{5}{12}, \nicefrac{1}{4})$, excluding the two
  sides $(\nicefrac{1}{2}, \nicefrac{1}{6})$, $(\nicefrac{1}{2}, \nicefrac{1}{3})$,
  and $(\nicefrac{5}{12}, \nicefrac{1}{3})$, $(\nicefrac{5}{12}, \nicefrac{1}{4})$.
  Besides, $M_{6} \cap \{(X, Y) \;|\; \nicefrac{5}{12} < X < \nicefrac{1}{2},
  0 < Y \leq 1\}$ is the sides and interior of the parallelogram
  $(\nicefrac{1}{2}, \nicefrac{1}{4})$, $(\nicefrac{1}{2}, \nicefrac{1}{3})$,
  $(\nicefrac{5}{12}, \nicefrac{5}{12})$, $(\nicefrac{5}{12}, \nicefrac{1}{3})$,
  excluding the two sides $(\nicefrac{1}{2}, \nicefrac{1}{4})$, $(\nicefrac{1}{2},
  \nicefrac{1}{3})$, and $(\nicefrac{5}{12}, \nicefrac{5}{12})$, $(\nicefrac{5}{12},
  \nicefrac{1}{3})$.  Note that $\nicefrac{1}{6} < \nicefrac{1}{4}$ and
  $\nicefrac{1}{4} < \nicefrac{7}{24}$ for the $Y$ coordinates of the
  vertices.  The part of region $(J \setminus M_{1}) \cap \{(X, Y)
  \;|\; \nicefrac{5}{12} < X < \nicefrac{1}{2}, 0 < Y \leq 1\}$ where $Y \leq
  \nicefrac{1}{3}$ is a subset of $M_{5} \cap \{(X, Y) \;|\; \nicefrac{5}{12}
  < X < \nicefrac{1}{2}, 0 < Y \leq 1\}$, while the rest of the region is a
  subset of $M_{6} \cap \{(X, Y) \;|\; \nicefrac{5}{12} < X < \nicefrac{1}{2},
  0 < Y \leq 1\}$.  Thus, $(J \setminus M_{1}) \cap \{(X, Y) \;|\;
  \nicefrac{5}{12} < X < \nicefrac{1}{2}, 0 < Y \leq 1\} \subset (M_{5} \cup
  M_{6}) \cap \{(X, Y) \;|\; \nicefrac{5}{12} < X < \nicefrac{1}{2}, 0 < Y <
  1\}$ (see Figure~\ref{fig:uragaeshi_3_5_6}).
  ~\\

  (v)~$(J \setminus M_{1}) \cap \{(X, Y) \;|\; 0 < X \leq
  \nicefrac{5}{12}, 0 < Y \leq 1\}$ is the sides and interior of the
  triangle $(\nicefrac{5}{12}, \nicefrac{7}{24})$, $(\nicefrac{5}{12},
  \nicefrac{5}{12})$, $(\nicefrac{1}{3}, \nicefrac{1}{3})$, excluding the two
  sides $(\nicefrac{5}{12}, \nicefrac{5}{12})$, $(\nicefrac{1}{3}, \nicefrac{1}{3})$
  and $(\nicefrac{1}{3}, \nicefrac{1}{3})$, $(\nicefrac{5}{12}, \nicefrac{7}{24})$.
  On the other hand, $M_{7} \cap \{(X, Y) \;|\; 0 < X \leq
  \nicefrac{5}{12}, 0 < Y \leq 1\}$ is the sides and interior of the
  triangle $(\nicefrac{5}{12}, \nicefrac{1}{4})$, $(\nicefrac{5}{12},
  \nicefrac{5}{12})$, $(\nicefrac{1}{4}, \nicefrac{5}{12})$.  Noting that
  $\nicefrac{1}{4} < \nicefrac{7}{24}$ for the $Y$ coordinates of the vertices
  and the point $(\nicefrac{1}{3}, \nicefrac{1}{3})$ lies on the side
  $(\nicefrac{1}{4}, \nicefrac{5}{12})$, $(\nicefrac{5}{12}, \nicefrac{1}{4})$, we
  obtain $(J \setminus M_{1}) \cap \{(X, Y) \;|\; 0 < X \leq
  \nicefrac{5}{12}, 0 < Y < 1\} \subset M_{7} \cap \{(X, Y) \;|\; 0 < X
  \leq \nicefrac{5}{12}, 0 < Y < 1\}$
  (see Figure~\ref{fig:uragaeshi_2_4_7}).
\end{proof}

\begin{proof}[of Lemma~\ref{lem:real_3_exact}]
  Lemmas~\ref{lem:miyagi_1}, \ref{lem:miyagi_3_2-7},
  and~\ref{lem:real_3_cover} imply that instances that apply to cases~(I)
  through (VII) are schedulable and
  a valid schedule is given either of
  by~\eqref{eq:miyagi_3_1}, \eqref{eq:miyagi_3_2}, \eqref{eq:miyagi_3_3},
  \eqref{eq:miyagi_3_4}, \eqref{eq:miyagi_3_5}, \eqref{eq:miyagi_3_6},
  or \eqref{eq:miyagi_3_7}.
  Lemma~\ref{lem:real_3_inclusion}
  states that any real-valued instance with three periods and density
  less than or equal to $\frac{5}{6}$ satisfies one of the conditions of
  cases~(I) through (VII).
\end{proof}

We continue the analysis of real-valued instances with three periods.
The proof of Theorem~\ref{thm:real_3_distinct} based on
Lemma~\ref{lem:real_3_exact} provides us with valid schedules for
real-valued instances with three periods and density less than or
equal to $\frac{5}{6}$.  For an instance that applies to case~(I) of
Lemma~\ref{lem:real_3_exact}, the schedule is determined by the
individual period values of the instance.  Namely, the schedule is
instance-sensitive, which did not occur in the case of two periods in
Section~\ref{sec:2_periods_proof}.  (The same is true if the
schedulability is derived from Lemmas~\ref{lem:monotonicity} and
\ref{lem:partitioning}, and Theorem~\ref{thm:real_2_density_1}.)

On the other hand, for an instance that applies to cases~(II) through
(VII), the schedule is independent of the individual period values of
the instance.  In particular, the last instance of
Lemma~\ref{lem:miyagi_3_2-7}, which corresponds to the condition of
case~(VII), has no $\varepsilon$, and the instance itself is the one
with minimum periods among those schedulable.  This fact can be
expressed in the context of \cite{doi:10.1137/1.9781611977042.8} as follows:
\begin{corollary}
  $\{((\frac{12}{5}, \frac{12}{5}, 6), |121213212123|)\}$ is a Pareto
  surface of the set of all real-valued instances $(a_{1}, a_{2},
  a_{3})$ such that $a_{1} \geq \frac{12}{5}$, $a_{2} \geq
  \frac{12}{5}$, $a_{3} \geq 6$, and $\frac{1}{a_{1}} +
  \frac{1}{a_{2}} + \frac{1}{a_{3}} \leq \frac{5}{6}$.
  \label{cor:pareto_3_partial}
\end{corollary}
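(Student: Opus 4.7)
The plan is to verify the three ingredients of the Pareto surface definition for the singleton set $\{((\frac{12}{5}, \frac{12}{5}, 6), |121213212123|)\}$: (i) the schedule $|121213212123|$ is valid for every instance $(a_{1}, a_{2}, a_{3})$ satisfying $a_{1} \geq \frac{12}{5}$, $a_{2} \geq \frac{12}{5}$, $a_{3} \geq 6$, and $\frac{1}{a_{1}} + \frac{1}{a_{2}} + \frac{1}{a_{3}} \leq \frac{5}{6}$; (ii) the periods $\frac{12}{5}, \frac{12}{5}, 6$ of the representative instance are minimum possible with respect to this schedule; and (iii) inclusion minimality of the set, which is automatic for a singleton.

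For (i), I would first apply Lemma~\ref{lem:miyagi_3_2-7}, which already asserts that the schedule defined by~\eqref{eq:miyagi_3_7}, namely $|121213212123|$, is valid for the base instance $(\frac{12}{5}, \frac{12}{5}, 6)$. Applying Lemma~\ref{lem:monotonicity} successively to each coordinate then extends validity to every $(a_{1}, a_{2}, a_{3})$ with $a_{1} \geq \frac{12}{5}$, $a_{2} \geq \frac{12}{5}$, $a_{3} \geq 6$; the density hypothesis $\frac{1}{a_{1}} + \frac{1}{a_{2}} + \frac{1}{a_{3}} \leq \frac{5}{6}$ serves only to restrict the set of instances of interest and plays no role in this step.

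For (ii), the remark preceding Lemma~\ref{lem:miyagi_3_2-7} already states that each period of an instance in that lemma not involving $\varepsilon$ is minimum possible, and since the instance $(\frac{12}{5}, \frac{12}{5}, 6)$ contains no $\varepsilon$, the claim is immediate. Should a self-contained argument be preferred, one observes that schedule~\eqref{eq:miyagi_3_7} has period~$12$ with task~$i$ performed $n_{i}$ times per cycle ($n_{1} = n_{2} = 5$, $n_{3} = 2$), so the average rate forces $a_{i} \geq 12/n_{i}$: for any $a_{i} < 12/n_{i}$, choosing $l = n_{i} k$ with $k$ large enough yields $\lceil l a_{i} \rceil \leq 12k - 1$, while one can exhibit a window of $12k - 1$ consecutive days containing only $n_{i} k - 1 < l$ occurrences of task~$i$ (by trimming one task-$i$ day from the boundary of a full $12k$-day block), contradicting the validity condition. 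I foresee no significant obstacle in this corollary's proof; the heavy lifting has already been done in Lemmas~\ref{lem:miyagi_3_2-7} and~\ref{lem:monotonicity}, and the corollary essentially repackages the tightness of the construction used for case~(VII) of Lemma~\ref{lem:real_3_exact}.
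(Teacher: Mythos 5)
Your proposal is correct and takes essentially the same route as the paper, which treats the corollary as an immediate consequence of Lemma~\ref{lem:miyagi_3_2-7} (validity of \eqref{eq:miyagi_3_7} for $(\frac{12}{5}, \frac{12}{5}, 6)$), Lemma~\ref{lem:monotonicity} (extension to all componentwise larger instances), and the observation that this instance has minimum possible periods. Your optional self-contained tightness argument is sound; note also that minimality among \emph{schedulable} instances (the paper's phrasing) follows at once because $(\frac{12}{5}, \frac{12}{5}, 6)$ has density exactly $1$, so decreasing any period pushes the density above $1$.
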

Unlike Corollary~\ref{cor:pareto_2}, we conjecture that there is no
Pareto surface of the set of all real-valued instances with three periods
and density less than or equal to $\frac{5}{6}$.  In contrast, $\{((2,
4, 4), |1213|), (3, 3, 3), |123|)\}$ is known to be a Pareto surface
of the set of all \emph{integer-valued} instances with three
periods and density less than or equal to
$\frac{5}{6}$~(\cite{doi:10.1137/1.9781611977042.8}).

Rather than a Pareto surface, let us focus on a common valid schedule
for a given set of instances.  As we have seen in
Lemmas~\ref{lem:real_3_exact}, \ref{lem:miyagi_3_2-7}, and
\ref{lem:real_3_cover}, such a schedule is found for instances that
apply to each of cases~(II) through (VII).  Then, what about the
condition of case~(I)?  Note in Figures~\ref{fig:uragaeshi_2_4_7}
and~\ref{fig:uragaeshi_3_5_6} that almost half area of the trapezoid
$M_{1}$ is covered by $M_{2}, M_{3}, \ldots, M_{7}$, which means that
there is a finite set of valid schedules for instances that correspond
to the covered region.  However, we do not know anything about the
uncovered region.
\section{Future Work}
\label{sec:future_work}
Towards Conjecture~\ref{conj:real}, we would like to build a framework
that comprehensively handles real-valued instance sets, analogous to
the Pareto surface for integer-valued instance
sets, considered by \cite{doi:10.1137/1.9781611977042.8}.
Although our current
methodology could be extended to the case where the periods take four
or five distinct values, solving the general case would be difficult.

The method of \cite{10.1145/3618260.3649757,KawamuraPNAS} for deriving
Theorem~\ref{thm:real_2_density_1} seems hopeless to immediately
apply.  The method relies on the fact that: The subset of all
integer-valued instances such that the periods and the density are
less than a certain threshold is finite and therefore the
schedulability can be investigated using a computer.  However, a
subset of such real-valued instances is in general infinite.
\acknowledgements
\label{sec:ack}
We are grateful to Akitoshi Kawamura for useful discussions.
We also thank the editors and
anonymous reviewers for their valuable comments and suggestions.

\bibliographystyle{abbrvnat}
\bibliography{main}
\label{sec:biblio}

\end{document}